\newtheorem{thm}{Theorem}[section]
\newtheorem{prop}[thm]{Proposition}
\newtheorem{defin}[thm]{Definition}
\theoremstyle{remark}
\newtheorem{ex}[thm]{Example}
\begin{document}

\title{Tetrahedron maps and symmetries of three dimensional integrable discrete equations}


\author{P.~Kassotakis \and M.~Nieszporski \and V. Papageorgiou \and   A.~Tongas}%

\newcommand{\Addresses}{{
  \bigskip
  \footnotesize

  P.~Kassotakis, \textsc{Department of Mathematics and Statistics, 
                         University of Cyprus, P.O Box: 20537, 
                         1678 Nicosia, Cyprus}
  \par\nopagebreak
\textit{E-mail address}, P.~Kassotakis: 
\texttt{pavlos1978@gmail.com, pkasso01@ucy.ac.cy}

  \medskip

  M.~Nieszporski, \textsc{Katedra  Metod  Matematycznych  Fizyki,  
                          Wydzia l  Fizyki,  Uniwersytet  Warszawski,  
                          ul.   Pasteura  5,02-093 Warszawa, Poland}
 \par\nopagebreak
\textit{E-mail address}, M.~Nieszporski: \texttt{maciejun@fuw.edu.pl}

  \medskip

  V.~Papageorgiou, \textsc{Department of Mathematics, University of Patras, 
                           26 504 Patras, Greece}
  \par\nopagebreak
  \textit{E-mail address}, V.~Papageorgiou: \texttt{vassilis@math.upatras.gr}

  \medskip
  
  A.~Tongas, \textsc{Department of Mathematics, University of Patras, 
                     26 504 Patras, Greece}
  \par\nopagebreak
  \textit{E-mail address}, A.~Tongas: \texttt{tasos@math.upatras.gr}

}}

\maketitle

\begin{abstract}

A relationship between the tetrahedron equation for maps and the consistency property 
of integrable discrete equations on $\mathbb{Z}^3$ is investigated. 
Our approach is a generalization of a method developed in the context of Yang--Baxter 
maps, based on the invariants of symmetry groups of the lattice equations. 
The method is demonstrated by a case--by--case analysis of the octahedron type lattice 
equations classified recently, leading to some new examples 
of tetrahedron maps and integrable coupled lattice equations.

\end{abstract}

\tableofcontents

\section{Introduction}

The purpose of this work is to explain that the study of discrete integrable equations on 
$\mathbb{Z}^3$ and their symmetry group of transformations is intimately connected with 
solutions of the functional tetrahedron (Zamolodchikov) equation \cite{zamo_1980}, 
or simply tetrahedron maps. 
This idea is not new and similar considerations have been applied to integrable lattice 
equations on $\mathbb{Z}^2$ \cite{abs2003}, leading to a systematic derivation 
of set theoretic solutions of the Yang--Baxter equation  
\cite{sklyanin-1988,drinfeld-1992,etingof-1999,veselov2003},
or simply Yang--Baxter maps, see \cite{veselov2003,ptv2006,vastas2007,pstv2010,sokor_2013,gra_2016} 
on this interplay.
Therefore it is reasonable to expect that such a correspondence exists also in three 
dimensions and our aim, in this work, is to give a detailed analysis of this link  
and the implications involved. 

Two classification results will play a significant role to our considerations. 
The first was given by Adler, Bobenko and Suris in \cite{absocta}, concerning a classification 
of integrable discrete equations on $\mathbb{Z}^3$ based on the consistency property, 
nowadays a synonymous of integrability. All classified lattice equations, even in non--commutative 
version, have appeared earlier in the literature, see \cite{nc1990}. 
The second was obtained by Sergeev in \cite{sergeev1998}, see also \cite{kks98,kas1996}, 
where a classification of tetrahedron maps was presented, based on the local Yang--Baxter
equations \cite{Maillet:1989,Maillet:1989b}, which serve as a zero--curvature condition 
of the corresponding maps. 
We will show that the invariants of the symmetry groups of transformations of the integrable
lattice equations on $\mathbb{Z}^3$, become the variables to express the tetrahedron maps 
from the latter list.

However, the tetrahedron maps obtained here are not exhausted to the classification results in 
\cite{sergeev1998}. In fact, our method leads to vector extensions of some of tetrahedron maps 
appearing in \cite{sergeev1998}, which may be viewed as Bianchi permutability of B\"acklund 
transformations for coupled PDEs of KP type. Furthermore, lattice equations give rise to a wider 
class of tetrahedron maps. This is a consequence of the multi--parameter symmetry group admitted 
by the lattice equations and their classification into non--equivalent two dimensional subgroups. 
Remarkably, imposing multiple consistent copies of the same lattice equation on all cubes of a 
hypercube, and considering the joint invariants of the full symmetry group of the equation, 
we again derive other tetrahedron maps, reflecting the multi--dimensional consistency property 
of the lattice equation.

The method may be exploited in the opposite order. Given a tetrahedron map one may find an 
integrable discrete system associated with it, and this correspondence 
is demonstrated by certain examples leading to non--commutative versions of lattice KP 
type equations. The associated discrete systems for Yang--Baxter maps were investigated in 
\cite{KaNie,KaNie1,KaNie3,KaNie:2018,Kouloukas:2012}

The structure of the paper is as follows. Section \ref{symm} contains the necessary background.
First we give an overview of the functional tetrahedron equation and its symmetries.
Next, we give the basics on symmetry groups of transformations of lattice equations. 
Particular emphasis is given on the dual formulation of Frobenius Theorem in terms of differential 
forms, which provides us an elegant way for constructing a complete set of $G$--invariants 
under a regular action. The motivating examples here are AKP and BKP, which share the same 
local symmetry group.

In Section \ref{class} a case--by--case analysis is developed for the integrable 3D lattice 
equations of octahedron type appeared in the classification \cite{absocta} by
Adler, Bobenko and Suris, as well as for BKP equation. 
The most prominent example is discrete potential KP equation (denoted by ($\chi_3$) in
\cite{absocta}) and its symmetry group $G \cong {\rm{Aff}}(\mathbb{C})$, supplying three 
non--equivalent two dimensional subgroups, and the corresponding tetrahedron maps. 
Lattice equation ($\chi_5$) gives a map which obeys an entwining relation with the
tetrahedron map obtained from lattice equation ($\chi_4$). 

In Section \ref{multi} our motivating guide is the multi--dimensional consistency property 
of potential forms of discrete KdV and KP equations. We exploit this interplay to show
that a different tetrahedron map naturally arises by imposing $\chi_3$ on $\mathbb{Z}^4$. 
This fact is a reminiscence of the most generic quadrirational YB map obtained from discrete 
potential KdV using its full symmetry group on $\mathbb{Z}^3$, cf. \cite{ptv2006}.
Finally, the method is applied on the opposite order connecting some non--commutative versions
of tetrahedron maps with their corresponding lattice equations.

The final Section \ref{concl} of the paper consists of Conclusions and Perspectives, 
where an overall evaluation of the results obtained in the main body of the paper is presented, 
along with the description of various venues for expanding the above results.

\section{Symmetries of tetrahedron maps and lattice equations} \label{symm}
\begin{figure}[h!]
\centering
\def\svgscale{0.5}
  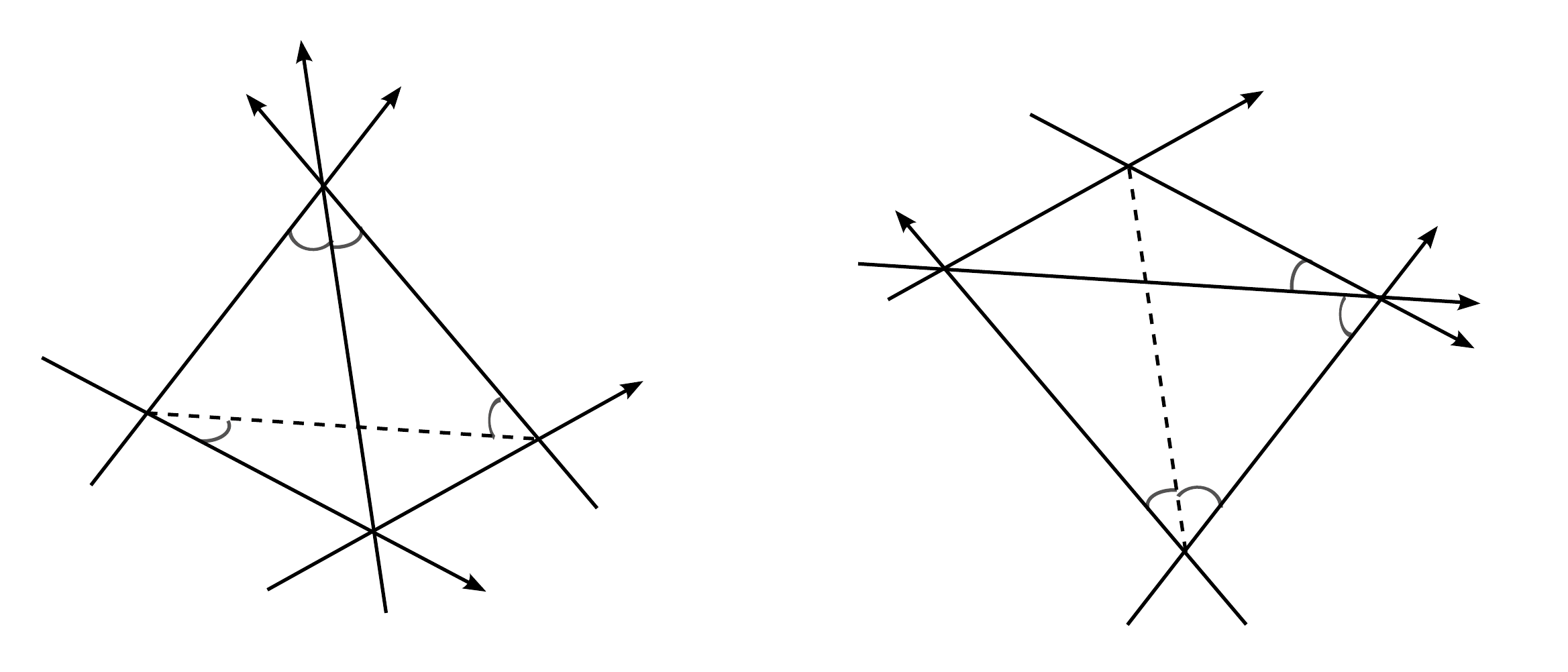
  \caption{Geometric interpretation of the tetrahedron relation.}
  \label{fig:tetrahedron}
\end{figure}
A geometric interpretation of the tetrahedron equation, or Zamolodchikov
equation, is given in Fig. \ref{fig:tetrahedron}, see \cite{zamo_1981}, \cite{kv}. 
In general position, a tetrahedron consists of four planes $H_i$, $i=1,2,3,4$ in 
$\mathbb{R}^3$, which are meeting along the lines $\ell_{ij}=H_i \cap H_j$, pairwise. 
Similarly, three planes meet on the four vertices $v_i$, $i=1,2,3,4$, e.g.  $v_1=H_2\cap H_3
\cap H_4$ etc. Lines are directed from vertex $v_i$ to $v_j$ with $i<j$. Moving one plane 
across the intersection point of the three other planes, we obtain a configuration of 
planes, lines and vertices depicted on the right of Fig. \ref{fig:tetrahedron}, 
regardless of which plane is moving. Let
$\widehat{i}$, denotes all pairs $(k,j)$ not containing $i$
$$\widehat{1} = \big(\, (23), (24), (34)\,\big)\,,\quad  \widehat{2} = \big(\, (13), (14),
(34)\,\big) \,,\quad \widehat{3} = \big(\, (12), (14), (24)\, \big)\,,\quad
\widehat{4} = \big(\, (12), (13), (23)\,\big) \,,$$ 
in lexicographic order, meaning that $(ij)$ precedes $(mn)$ if $i<m$ or $i=k$, and $j<n$.
Then tetrahedron equation may be regarded as the condition that the two
configurations of Fig \ref{fig:tetrahedron} lead to the same result
\begin{equation} \label{zamo1}
  R^{\widehat{4}} \, R^{\widehat{3}} \, R^{\widehat{2}} \, R^{\widehat{1}}\, =
  R^{\widehat{1}} \, R^{\widehat{2}} \, R^{\widehat{3}} \, R^{\widehat{4}} \,,
\end{equation}
Denoting the pairs $(ij)$ by
   $$1=(12)\, \quad 2=(13)\,, \quad 3=(23)\,,\quad 4=(14)\,,\quad
   5=(24)\,,\quad 6=(34)\,,$$
tetrahedron equation \eqref{zamo1} takes the form 
\begin{equation} \label{zamo2}
R^{(123)} \, R^{(145)} \, R^{(246)} \, R^{(356)} = R^{(356)} \, R^{(246)}
\, R^{(145)} \, R^{(123)} \,,
\end{equation}
which is the main subject of this paper.  

In its original form, tetrahedron equation is regarded as an equality of operators on 
the 6--tuple tensor product $V^{\otimes \, 6}$, of a vector space $V$. Here, 
$R^{(123)} \in \rm{End} (V^{\otimes \, 6})$ acts only on the first, second, and third
factor and as identity on the others, and similarly for $R^{(145)}$ etc. 
Replacing $V$ by any set $\mathbb{X}$ and regarding equation
\eqref{zamo2} as an equality of composition of maps, we arrive naturally to
the notion of {\em functional tetrahedron maps}.

\subsection{Symmetries of the tetrahedron equation}

\begin{defin} 
 Assume there is a bijection $\phi : \mathbb{X}\to\mathbb{X}$, 
 and $R  : \mathbb{X}^3\to\mathbb{X}^3$ satisfies the
 tetrahedron relation, 
 \begin{equation} \label{zamo3}
R^{(123)} \, R^{(145)} \, R^{(246)} \, R^{(356)} = R^{(356)} \, R^{(246)}
\, R^{(145)} \, R^{(123)} \,,
\end{equation}
regarded as an equality of composition of maps in $\mathbb{X}^6$.
Then the same is true for
\begin{equation}
\widetilde{R} =  \phi^{-1} \times \phi^{-1} \times \phi^{-1}\, R  \, \phi 
\times \phi \times \phi \,,
\end{equation}
and the maps $R$, $\widetilde{R}$, are called gauge equivalent, and
$\phi$, a conjugation.
\end{defin}

\begin{prop} \label{prop:symm1}
Let $R$ satisfies the tetrahedron equation and $\tau_{13}$ the permutation of the first
and third components, i.e $\tau_{13}(x,y,z)=(z,y,x)$. Then $\widetilde{R} = \tau_{13} \, R\,
\tau_{13}$ is also a solution of the tetrahedron equation.
\end{prop}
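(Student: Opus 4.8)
The plan is to verify the claim by direct substitution, exploiting the fact that $\tau_{13}$ is an involution and tracking how conjugation by it permutes the labels of the six tensor factors. The key observation is that $\tau_{13}$, acting on a triple, reverses the order of its three slots; when we lift this to an operator on $\mathbb{X}^6$ that is applied before and after each $R^{(ijk)}$ in the product, the net effect is to relabel the factors by the transposition $\sigma=(16)(25)(34)$ that reverses $(1,2,3,4,5,6)$. First I would make precise that $\widetilde R^{(ijk)} = \tau_{13}^{(ijk)}\, R^{(ijk)}\, \tau_{13}^{(ijk)}$, where $\tau_{13}^{(ijk)}$ swaps slots $i$ and $k$, and then insert $\tau_{13}\tau_{13}=\mathrm{id}$ between consecutive factors on each side of the tetrahedron relation for $\widetilde R$ so that the whole composition collapses to a single global conjugation $\Phi^{-1}\,(\text{product of }R\text{'s})\,\Phi$ with $\Phi$ built from $\tau_{13}$'s acting on all relevant slots.

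The heart of the argument is then a combinatorial bookkeeping step: I would compute the index pattern that results. Conjugating $R^{(123)}$ by the relevant slot-reversals sends it to $R$ acting on slots $(3,2,1)$, i.e.\ to $R^{(321)}$ up to the internal reversal; more usefully, conjugating the \emph{entire} left-hand product by the global slot-reversal $\sigma$ turns
\begin{equation*}
R^{(123)}\,R^{(145)}\,R^{(246)}\,R^{(356)}
\end{equation*}
into
\begin{equation*}
\bar R^{(654)}\,\bar R^{(632)}\,\bar R^{(531)}\,\bar R^{(421)},
\end{equation*}
where $\bar R=\tau_{13} R \tau_{13}$ acts with its slots written in that order. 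Reading the sextuple labels $(1,\dots,6)$ in reverse and rearranging each triple into increasing order (which only changes $R$ by an internal $\tau_{13}$, harmless since that is exactly how $\bar R$ is defined), this product is precisely the right-hand side $\bar R^{(356)}\bar R^{(246)}\bar R^{(145)}\bar R^{(123)}$ of the tetrahedron equation for $\bar R$. Thus the tetrahedron relation for $R$, after conjugation by $\sigma$, becomes the tetrahedron relation for $\widetilde R=\bar R$.

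I expect the only real obstacle to be notational: one must be scrupulous that ``$R^{(ijk)}$'' with a non-increasing index triple is interpreted consistently (i.e.\ $R^{(kji)} = \tau_{13}^{(ik)} R^{(ijk)} \tau_{13}^{(ik)}$ within the relevant three slots), and that the global reversal $\sigma=(16)(25)(34)$ indeed maps the index set $\{(123),(145),(246),(356)\}$ of one side onto that of the other while reversing their order of composition. Once that dictionary is fixed, the proof is a one-line substitution: apply $\sigma$ as a conjugation to both sides of \eqref{zamo3}, use $\tau_{13}^2=\mathrm{id}$ to absorb the conjugations into each factor, and observe that the two sides of the resulting identity are exactly the two sides of the tetrahedron equation for $\widetilde R$. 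An alternative, more pedestrian route — which I would mention as a fallback — is to note that Proposition~\ref{prop:symm1} is the special case of the general symmetry principle that any relabeling of the three base slots by a fixed permutation $\pi\in S_3$ (here $\pi$ the transposition $(1\,3)$) is a symmetry of the tetrahedron equation, since such a $\pi$ acts on the six pair-labels $(ij)$ compatibly with the combinatorics of the configurations in Figure~\ref{fig:tetrahedron}; the permutation $\tau_{13}$ is simply the induced action on $\mathbb{X}^3$.
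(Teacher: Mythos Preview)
Your overall strategy --- conjugating the tetrahedron relation for $R$ by a single global permutation of $\mathbb{X}^6$ and showing that the result is exactly the tetrahedron relation for $\widetilde R$ --- is sound and in fact cleaner than what the paper does. The paper instead writes down four separate conjugation identities (one for each factor $R^{(123)}$, $R^{(145)}$, $R^{(246)}$, $R^{(356)}$, each with its own product of transpositions $\tau_{ij}$) and then verifies the claim by brute substitution. So the two arguments are genuinely different in style: yours is a one--shot relabeling, the paper's is a factor--by--factor check.

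There is, however, a concrete error in your execution. The permutation you propose, $\sigma=(16)(25)(34)$, does \emph{not} map the index set $\{(123),(145),(246),(356)\}$ to itself: it sends these (as unordered triples) to $\{456\}$, $\{236\}$, $\{135\}$, $\{124\}$, which is precisely the complementary set of triples in $\binom{6}{3}$. Consequently the displayed product ``$\bar R^{(654)}\bar R^{(632)}\bar R^{(531)}\bar R^{(421)}$'' cannot be rewritten as $\bar R^{(356)}\bar R^{(246)}\bar R^{(145)}\bar R^{(123)}$. The permutation that actually does the job is $\pi=(16)(25)$ (fixing $3$ and $4$): one checks directly that
\[
P_\pi\,R^{(123)}\,P_\pi^{-1}=\widetilde R^{(356)},\quad
P_\pi\,R^{(145)}\,P_\pi^{-1}=\widetilde R^{(246)},\quad
P_\pi\,R^{(246)}\,P_\pi^{-1}=\widetilde R^{(145)},\quad
P_\pi\,R^{(356)}\,P_\pi^{-1}=\widetilde R^{(123)},
\]
so conjugating \eqref{zamo3} by $P_\pi$ yields the tetrahedron equation for $\widetilde R$ with the two sides swapped. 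A related issue is your ``insert $\tau_{13}\tau_{13}=\mathrm{id}$ between consecutive factors'' step: the operators $\tau_{13}^{(ijk)}$ are \emph{different} transpositions on $\mathbb{X}^6$ for different triples (namely $\tau_{13}$, $\tau_{15}$, $\tau_{26}$, $\tau_{36}$), they do not commute with one another or with neighbouring $R^{(\cdot)}$'s, and so the product does not ``collapse'' in the way you describe. Once you replace this heuristic by the single conjugation by $P_\pi$ above, the argument goes through in one line.
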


\begin{proof}
  The proof follows by direct check using the  identities
  \begin{eqnarray*}
R^{(123)} &=& \phantom{\tau_{12}}\,\tau_{16}\,\tau_{15}\,\tau_{25}\,\tau_{36} \, 
R^{(356)} \,\tau_{36}\,\tau_{25}\,\tau_{15}\,\tau_{16} \,, \\
R^{(145)} &=& \tau_{12}\,\tau_{26}\,\tau_{25}\,\tau_{15}\,\tau_{26} \, 
R^{(246)} \,\tau_{26}\,\tau_{15}\,\tau_{25}\,\tau_{26}\,\tau_{12} \,, \\
R^{(246)} &=& \phantom{\tau{12}\,\tau_{12}}\tau_{16}\,\tau_{25}\,\tau_{15} \, 
R^{(145)} \,\tau_{15}\,\tau_{25}\,\tau_{16} \,, \\  
R^{(356)} &=& \phantom{\tau_{12}}\,\tau_{26}\,\tau_{25}\,\tau_{15}\,\tau_{13} \, 
R^{(123)} \,\tau_{13}\,\tau_{15}\,\tau_{25}\,\tau_{26}  \,,
  \end{eqnarray*}
and the fact that $\tau_{ij}$ are involutions, i.e. $\tau_{ij}^2={\mathrm{Id}}$.
\end{proof}

\begin{prop} \label{prop:symm2}
Let $\sigma$ be an involutive symmetry of a tetrahedron map $R$, i.e.
\begin{equation}
\sigma \times \sigma \times \sigma \,  R  \, \sigma \times
\sigma \times \sigma = R \,, \qquad \sigma^2=  {\mathrm{Id}}  \,.
\end{equation}
Then the map
\begin{equation}
\widetilde{R} = \sigma \times {\mathrm{Id}} \times \sigma \, R  \,\, {\mathrm{Id}} \times
\sigma \times {\mathrm{Id}} \,,
\end{equation}
is a new solution of the tetrahedron equation. The same holds for the map 
$$\widehat{R} = {\mathrm{Id}} \times
\sigma \times {\mathrm{Id}} \,  R \, \sigma \times {\mathrm{Id}} \times \sigma \,. $$
\end{prop}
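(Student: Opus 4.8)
The plan is to prove both assertions of Proposition~\ref{prop:symm2} in parallel, since $\widehat R$ is obtained from $\widetilde R$ by the same recipe applied on the complementary slots, and the argument is symmetric under $1\leftrightarrow 3$, $4\leftrightarrow 6$. First I would set up notation: write $s_i$ for the operator on $\mathbb{X}^{6}$ that applies $\sigma$ in the $i$-th factor and the identity elsewhere, so that the hypothesis $\sigma\times\sigma\times\sigma\,R\,\sigma\times\sigma\times\sigma=R$ applied on the triple of slots $(ijk)$ reads $s_i s_j s_k\, R^{(ijk)}\, s_i s_j s_k = R^{(ijk)}$, and each $s_i$ is an involution commuting with $s_j$ for $i\ne j$ and with any $R^{(abc)}$ whose index set avoids $i$. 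With this, the claim $\widetilde R = s_1 s_3\, R\,s_2$ (on slots $(123)$, say) being a tetrahedron map amounts to showing that the conjugation $\phi$ realizing the gauge transformation of the Definition can be built slot-by-slot from $\sigma$ and $\mathrm{Id}$.

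The key observation I would exploit is that replacing $R^{(ijk)}$ by $\widetilde R^{(ijk)}$ everywhere in \eqref{zamo2} is \emph{not} literally a conjugation by a single $\phi$ acting diagonally, because the pattern $\sigma\times\mathrm{Id}\times\sigma$ vs. $\mathrm{Id}\times\sigma\times\mathrm{Id}$ assigns $\sigma$ to the first and third entries of a triple but $\mathrm{Id}$ to the middle one. The point is that in the tetrahedron labelling $1=(12),\ldots,6=(34)$, each of the six indices occurs in exactly three of the four factor-triples, and I would check that, across the triples $(123),(145),(246),(356)$, a fixed index $a$ occupies the "outer" position (first or third in the ordered triple) a consistent number of times to make the net dressing uniform. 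Concretely I would compute, for each $a\in\{1,\dots,6\}$, the product of the single-slot maps ($s_a$ or $\mathrm{Id}$) that get attached to slot $a$ when one substitutes $\widetilde R^{(ijk)} = s_i s_k\, R^{(ijk)}\, s_j$ into all four factors on each side of \eqref{zamo2}. If for every $a$ the accumulated left dressing and right dressing each collapse — using $s_a^2=\mathrm{Id}$ and the commutation relations — to the \emph{same} operator $\psi_a\in\{\mathrm{Id},s_a\}$, then the substituted identity is exactly $\bigl(\prod_a\psi_a\bigr)\cdot(\text{LHS of \eqref{zamo2}})\cdot(\prod_a\psi_a)^{-1}=\bigl(\prod_a\psi_a\bigr)\cdot(\text{RHS of \eqref{zamo2}})\cdot(\prod_a\psi_a)^{-1}$, which holds because the undressed \eqref{zamo2} holds; hence $\widetilde R$ satisfies the tetrahedron equation. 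The hypothesis $s_i s_j s_k R^{(ijk)} s_i s_j s_k = R^{(ijk)}$ is what lets me freely insert or cancel an $s_a$ on any factor that would otherwise obstruct this collapse — it effectively says each individual $R^{(ijk)}$ already commutes with the "full" dressing, so moving a stray $\sigma$ through it costs nothing.

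The bookkeeping step — verifying that the leftover single-slot operators $\psi_a$ match on the two sides — is where I expect the only real work to be, and it is the natural place for the "direct check" the authors allude to in the previous proof; in fact the cleanest route is to mimic Proposition~\ref{prop:symm1} exactly: derive conjugation identities of the form $\widetilde R^{(ijk)} = (\text{product of }s\text{'s})\, R^{(ijk)}\, (\text{same product})$ for each of the four factors, with all the $s$-products built from a \emph{single} global diagonal conjugation, and then telescope. If the slot-counting does not come out uniform on the nose, the fix is to allow $\phi$ itself to be a genuine conjugation (the map $R\mapsto \phi^{-\times 3}R\,\phi^{\times 3}$ of the Definition) composed with the permutation symmetry $\tau_{13}$ of Proposition~\ref{prop:symm1}, and one checks that $\widetilde R$ lies in the orbit generated by these two operations; the involutivity $\sigma^2=\mathrm{Id}$ guarantees there are no sign or order ambiguities in that composition. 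Either way the proof is purely formal, resting on (i) the involution and commutation properties of the $s_a$, (ii) the $\sigma$-invariance hypothesis, and (iii) the already-established \eqref{zamo2} for $R$ itself.
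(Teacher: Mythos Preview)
The paper does not actually supply a proof of Proposition~\ref{prop:symm2}: it is stated and then immediately remarked to be the tetrahedron analogue of a known Yang--Baxter result in \cite{pstv2010}. So there is no paper-proof to compare against; I can only assess your outline on its own merits.

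Your approach is the right one and does go through, but the slot-counting you propose does \emph{not} come out uniform on the nose. In the labelling $1,\dots,6$, slots $1,3,6$ always sit in outer positions, slot $4$ always in the middle, but slots $2$ and $5$ are mixed (e.g.\ $2$ is middle in $(123)$ and outer in $(246)$). So the hoped-for single global conjugation $\Psi=\prod_a\psi_a$ does not arise from a naive position count alone; this is precisely where the symmetry hypothesis must be invoked, as you correctly anticipate. Concretely: since $\widetilde R^{(ijk)}$ has the two equal representations $s_is_kR^{(ijk)}s_j=s_jR^{(ijk)}s_is_k$, you may alternate them along each side of \eqref{zamo2}. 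Choosing the pattern $(s_1s_3Rs_2)(s_4Rs_1s_5)(s_2s_6Rs_4)(s_5Rs_3s_6)$ on the left and the mirror pattern on the right, all interior $s$'s commute past the adjacent $R$'s and both sides collapse to $P\cdot(\text{original side})\cdot P$ with $P=s_1s_3s_4s_6$. That finishes the proof. (Equivalently: whatever dressings $P,Q$ you end up with on the two sides, their product $PQ=s_1\cdots s_6$ commutes with every $R^{(ijk)}$ by the symmetry hypothesis, so $PXP=QXQ$.)

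Your fallback involving $\tau_{13}$ is a red herring: $\widetilde R$ is not in general gauge-equivalent to $R$, and no appeal to Proposition~\ref{prop:symm1} is needed. Drop that paragraph and just carry out the alternating-form bookkeeping above; the argument is then complete and entirely elementary.
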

One may regard the previous statement as the tetrahedron analog of Proposition 2 in 
\cite{pstv2010} for the symmetries of Yang--Baxter maps. 

\begin{ex}
Consider the electric network transformation introduced in \cite{sergeev1998}, see also \cite{kks98}
\begin{equation} \label{star}
R(x,y,z) = \left(\frac{x\,y}{x+z+x\,y\,z}\, ,\, {x+z+x\,y\,z} \, , \, \frac{z\,y}{x+z+x\,y\,z}\right) \,.
\end{equation}
The involution $\sigma(x)=-x$, is a symmetry of the above map. Thus,
\begin{equation}
\widetilde{R}(x,y,z) = \left(\frac{x\,y}{x+z-x\,y\,z}\, ,\, {x+z-x\,y\,z} \, , \,
\frac{z\,y}{x+z-x\,y\,z}\right) \,,
\end{equation}
is another solution of the tetrahedron equation.
\end{ex}
In analogy with the Yang--Baxter maps we say that $R$ is reversible if it satisfies 
the following unitarity (reversibility) condition  
\begin{equation}
\tau_{13} \, R \, \tau_{13} \, R = \mathrm{Id}\,.
\end{equation}
For example, map \eqref{star} is both an involution and a reversible tetrahedron map. 

The existence of an involutive symmetry of a tetrahedron map can be exploited in 
different ways, such as obtaining solutions of entwining tetrahedron equations.
\begin{prop}
Let $\sigma$ be an involutive symmetry of a tetrahedron map $R$, then the following 
entwining tetrahedron equations hold
\begin{eqnarray*}
R^{(123)} \, {R^{(145)}\sigma_1} \, {R^{(246)}\sigma_2} \, {R^{(356)}\sigma_3} &=& 
{R^{(356)}\sigma_3} \, {R^{(246)}\sigma_2} \, {R^{(145)}\sigma_1} \, R^{(123)} \,,\\
{R^{(123)}\sigma_3} \, {R^{(145)}\sigma_5} \, {R^{(246)}\sigma_6} \, R^{(356)} &=&
R^{(356)}\,{R^{(246)}\sigma_6} \,{R^{(145)}\sigma_5} \, {R^{(123)}\sigma_3} \,,\\
{R^{(123)}\sigma_3} \, {R^{(145)}\sigma_1 \sigma_5} \, {R^{(246)}\sigma_2 \sigma_6} \, {R^{(356)}\sigma_3} &=& 
{R^{(356)}\sigma_3} \, {R^{(246)}\sigma_2 \sigma_6} \, {R^{(145)}\sigma_1 \sigma_5} \, {R^{(123)}\sigma_3} \,,  
\end{eqnarray*}
\end{prop}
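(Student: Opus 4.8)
The plan is to reduce each of the three entwining relations to the ordinary tetrahedron equation \eqref{zamo3} for $R$, by sliding every conjugation $\sigma_i$ to the right of both products; here $\sigma_i$ denotes $\sigma$ acting on the $i$-th factor of $\mathbb{X}^6$. I will use three elementary facts: (i) distinct $\sigma_i$ act on distinct factors, hence commute with one another, and $\sigma_i^2=\mathrm{Id}$; (ii) for the same reason $\sigma_i$ commutes with $R^{(jkl)}$ whenever $i\notin\{j,k,l\}$; and (iii) the hypothesis that $\sigma$ is an involutive symmetry of $R$ is, read on $\mathbb{X}^6$, exactly the assertion that $\sigma_i\sigma_j\sigma_k$ commutes with $R^{(ijk)}$, since $\sigma_i\sigma_j\sigma_k\,R^{(ijk)}\,\sigma_i\sigma_j\sigma_k=R^{(ijk)}$ becomes a commutation once one uses $(\sigma_i\sigma_j\sigma_k)^2=\mathrm{Id}$.

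First I would treat the middle relation as a template. In its left-hand side $R^{(123)}\sigma_3\,R^{(145)}\sigma_5\,R^{(246)}\sigma_6\,R^{(356)}$ push $\sigma_3$ rightward: by (ii) it passes $R^{(145)}$ and $R^{(246)}$, then $\sigma_3\sigma_5$ passes $R^{(246)}$, and finally $\sigma_3\sigma_5\sigma_6$ crosses $R^{(356)}$ by (iii); so the left-hand side equals $\big(R^{(123)}R^{(145)}R^{(246)}R^{(356)}\big)\sigma_3\sigma_5\sigma_6$. On the right-hand side $R^{(356)}\,R^{(246)}\sigma_6\,R^{(145)}\sigma_5\,R^{(123)}\sigma_3$ the conjugations $\sigma_6$ and $\sigma_5$ slide freely to the end by (ii) and $\sigma_3$ is already there, giving $\big(R^{(356)}R^{(246)}R^{(145)}R^{(123)}\big)\sigma_3\sigma_5\sigma_6$. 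Cancelling the bijection $\sigma_3\sigma_5\sigma_6$ on the right, the relation reduces to \eqref{zamo3}.

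The remaining two relations follow by the identical bookkeeping, with (iii) now applied at $R^{(123)}$. In the first one every $\sigma_i$ occurring has index in $\{1,2,3\}$; on the left-hand side they all slide to the end without meeting a factor carrying their index, while on the right-hand side they accumulate as $\sigma_1\sigma_2\sigma_3$ in front of $R^{(123)}$ and cross it by (iii); both sides become the corresponding side of \eqref{zamo3} composed on the right with $\sigma_1\sigma_2\sigma_3$. In the third relation, after sliding everything rightward as far as free commutation allows, the word remaining in front of $R^{(356)}$ (left-hand side) is $\sigma_3\sigma_5\sigma_6$ and the word in front of $R^{(123)}$ (right-hand side) is $\sigma_1\sigma_2\sigma_3$; each crosses its factor by (iii), and after collecting the loose conjugations and using $\sigma_3^2=\mathrm{Id}$ both sides reduce to \eqref{zamo3} composed on the right with $\sigma_1\sigma_2\sigma_5\sigma_6$.

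All the manipulations are routine once (i)--(iii) are recorded; the only point that requires care --- and the one I expect to be the main, if minor, obstacle --- is the combinatorial check that at each crossing precisely the triple $\sigma_i\sigma_j\sigma_k$ matching the factor $R^{(ijk)}$ being crossed has been assembled in front of it (possibly alongside indices disjoint from $\{i,j,k\}$, which pass by (ii)). It is exactly this requirement that forces the particular placement of the $\sigma_i$'s in the three relations, and it is settled by inspecting the three configurations one at a time.
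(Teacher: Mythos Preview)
Your argument is correct: the three commutation rules (i)--(iii) you isolate are exactly what is needed, and the bookkeeping you sketch goes through in each of the three cases as you describe. In particular, for the third relation the cancellation $\sigma_3^2=\mathrm{Id}$ on both sides after all the sliding leaves precisely $\sigma_1\sigma_2\sigma_5\sigma_6$, and the reduction to \eqref{zamo3} is clean.

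As for comparison with the paper: the paper does not actually supply a proof of this proposition. It is stated and then followed only by a remark on the notation $\sigma_i$ and a reference to \cite{Kassotakis:2019} for the analogous Yang--Baxter statement. Your write-up therefore fills a genuine gap rather than reproducing an existing argument; the method you use --- reducing each entwining relation to the plain tetrahedron equation by commuting the $\sigma_i$ through --- is the natural one and is in the same spirit as the short commutation arguments the paper uses elsewhere (e.g.\ Proposition~\ref{prop:symm1}).
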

\noindent where $\sigma_i$ acts as $\sigma$ on the $i$ component and as 
identity on the rest components. Similar entwining relations hold in the 
context of Yang--Baxter maps, as it was shown recently in \cite{Kassotakis:2019}.

\subsection{Invariants and symmetry groups of 3D lattice equations}

In this section we review some basic notions on the Lie symmetries and techniques for 
constructing absolute and relative invariants applied on 3D lattice equations. 
A more detailed exposition on the subject can be found in texts on invariant theory, 
see \cite{olver99}, \cite{olver95}.
We start with the building block of equations on the lattice 
$\mathbb{Z}^3 = \mathbb{Z}\, \mathbf{e}_1 + \mathbb{Z} \,\mathbf{e}_2 +\mathbb{Z}\, 
\mathbf{e}_3 $, 
consisting of a single algebraic relation of the form
\begin{equation}
\mathcal{E}(f,f_{ij},f_{ijk})=0\,, \quad i,j,k=1,2,3\,,\label{eq:quad} 
\end{equation}
where $f:\mathbb{Z}^3 \rightarrow \mathbb{C}$ (or $\mathbb{CP}^1$). 
We have reserved lower
indices to indicate the shifted values of $f$, 
$$ f_1=f((n_1+1)\,\mathbf{e}_1+n_2\,\mathbf{e}_2+n_3\,\mathbf{e}_3)\,,\quad 
f_{13}=f((n_1+1)\,\mathbf{e}_1+n_2\,\mathbf{e}_2+(n_3+1)\,\mathbf{e}_3)\,, $$
etc. where $n_1,n_2,n_3 \in \mathbb{Z}$, denote the site variables of the lattice. 
Examples of such lattice equations are given by the following list \cite{absocta}

\begin{eqnarray}
(\chi_1)  \quad    &  f_{12}\,f_3 - f_{13}\,f_2 +  f_{23}\,f_1  = 0 \,,   \\ [0.3cm]
(\chi_2)  \quad    &  \displaystyle{ \frac{ (f_{12}-f_{2}) \, (f_{23}-f_{3}) \,
                    (f_{13}-f_1)}{ (f_2  -f_{23}) \, (f_3-f_{13}) \, (f_1-f_{12})} } = -1 \,, \\ [0.3cm]
(\chi_3)  \quad    &   f_{12}\, (f_1 - f_2) + f_{23}\, (f_2 - f_3) + f_{13}\,(f_3 - f_1)=0\,,  \\ [0.3cm]
(\chi_4)  \quad    & \displaystyle{  \frac{f_{12} - f_{23}}{f_2}
                      +\frac{f_{23} - f_{13}}{f_3} + \frac{f_{13} - f_{12}}{f_1}=0\,, }\\ [0.3cm]
(\chi_5)  \quad    & \displaystyle{ \frac{f_{12}}{f_2} +  \frac{f_{23} - f_{13}}{f_3} -
                     \frac{f_{12}}{f_1} =0 \,. }
\end{eqnarray}
They are called of octahedron type since they relate the values of $f$ on 6 vertices of a three cube, 
which naturally define an octahedron. All lattice equations of the above list, even in non--commutative 
version, have appeared earlier in the literature with the corresponding PDE's names, 
see \cite{nc1990} and references therein. 
Their main common characteristic is the multi--dimensional consistency property, which will be explained 
in Section \ref{multi} in connection with discrete potential KP, equation denoted as $(\chi_3)$ above. 
It is exactly that particular property which is reflected to the tetrahedron property 
for the derived maps. We now turn to the basic concept of our investigations, namely the 
symmetry group of transformations of lattice equations and their corresponding differential invariants.

Consider a one--parameter group of transformations $G$ acting on the domain of the dependent variable $f$ 
\begin{equation}
G: f \mapsto \Phi(n_1,n_2,n_3,f;\varepsilon) \,, \quad \varepsilon \in \mathbb{C}\,. \nonumber
\end{equation}
Let ${\rm J}^{(k)}$ denote the lattice jet space with coordinates $(f,f_{J})$, where by
$f_{J}$ we mean the forward shifted values of $f$, indexed by all ordered multi--indices 
$J=(j^1,j^2,\ldots j^k)$, $1\leq j^r \leq 3$, of order $k=\# J$. 
The discrete prolongation of the group action of $G$ on ${\rm J}^{(k)}$ is
\begin{equation}
G^{(k)}: (f, f_J) \mapsto (\Phi(n_1,n_2,n_3,f;\varepsilon), 
\Phi_{J}(n_1,n_2,f;\varepsilon))\,,\label{eq:Gpr} 
\end{equation}
where $\Phi_{1}(n_1,n_2,n_3,f;\varepsilon)=\Phi(n_1+1,n_2,n_3,f_{1};\varepsilon)$, etc. namely
shifts of $\Phi$ on its arguments.
The infinitesimal generator of the group action of $G$ on $f$ is given by the vector field
\begin{equation}
\mathbf{v} = Q(n_1,n_2,n_3,f)\,\partial_{f}\,, \quad \mbox{where}\quad  
Q(n_1,n_2,n_3,f)=\left. 
\frac{\rm d \phantom{\varepsilon}}{{\rm d}\, \varepsilon}\,\Phi(n_1,n_2,n_3,f;\varepsilon) \right|_{\varepsilon=0} \,.
\nonumber
\end{equation}
There is a one--to--one correspondence between connected groups of transformations and their associated 
infinitesimal generators, since the group action is reconstructed by the flow of the vector 
field $\mathbf{v}$ by exponentiation
\begin{equation}
\Phi(n_1,n_2,f;\varepsilon)=\exp (\varepsilon \, \mathbf{v}) f \,.  \nonumber
\end{equation}
The infinitesimal generator of the action of $G^{(k)}$ on ${\rm J}^{(k)}$ is the 
associated $k^{\rm th}$ order prolonged vector field
\begin{equation}
\mathbf{v}^{(k)} = \sum_{\# J=j=0}^k Q_J(n_1,n_2,n_3,f)\,\partial_{f_J}\,. \nonumber
\end{equation}
The transformation $G$ is a Lie--point symmetry of the lattice equations (\ref{eq:quad}), 
if it transforms any solution of (\ref{eq:quad}) to another solution of the same equation. 
Equivalently, $G$ is a symmetry of equation (\ref{eq:quad}), if the equations are not 
affected by the transformation (\ref{eq:Gpr}). 
The infinitesimal criterion for a connected group of transformations $G$ to be a symmetry 
of equation (\ref{eq:quad}) is 
\begin{equation}
\mathbf{{v}}^{(3)} \big(\mathcal{E}(f,f_{i},f_{ij},f_{ijk}\big) =0\,. \label{eq:infinv}
\end{equation}
Equation (\ref{eq:infinv}) should hold for all solutions of equations (\ref{eq:quad}), 
and thus the latter equation and its shifted consequences should be taken into account. 
Equation (\ref{eq:infinv}) determines the most general infinitesimal Lie point
symmetries of the (\ref{eq:quad}). The resulting set of infinitesimal
generators forms a Lie algebra $\mathfrak{g}$ from which the corresponding 
Lie point symmetry group $G$ can be found by exponentiating the given vector fields.

An (absolute) {\em lattice invariant} under the action of $G$ is a function 
$I:{\rm J}^{(k)}\rightarrow \mathbb{C}$ 
which satisfies 
\begin{equation} 
  I(g^{(k)} \cdot (f, f_J)) = I( f,f_J)\,,
\end{equation}  
for all $g \in G$, and all $( f,f_j) \in {\rm J}^{(k)}$.
For connected groups of transformations, a necessary and sufficient condition for a function 
$I$ to be invariant under the action of $G$, is the annihilation 
of $I$ by all prolonged infinitesimal generators, i.e.
\begin{equation} \label{ic}
\mathbf{{v}}^{(k)}(I) = 0 \,, 
\end{equation}
for all $\mathbf{{v}} \in \mathfrak{g}$. The result is an overdetermined system of linear,
homogeneous partial differential equations which, in principle, can be solved by the method 
of characteristics. 
The conditions under which such invariants exist are specified by Frobenius Theorem. 

Using the dual formulation of Frobenius Theorem in terms of differential forms, 
the problem of finding the invariants under a symmetry group $G$ can be recast to that of
solving a homogeneous linear system. This is accomplished by the natural correspondence  
between the vector $\mathbf{v}$ and the differential ${\mathrm{d}} I$ of the function $I$ given by
\begin{equation} \label{ic1}
  \langle {\mathrm{d}} I \, , \mathbf{v} \rangle \equiv \mathbf{v}^{(k)}(I) = 
  \sum_{\# J=j=0}^k Q_j\,\partial_{f_j} \, (I) \,,
\end{equation}
where $\langle \,\,,\,\rangle$ denotes the usual inner product between a vector and a dual vector.
Combining \eqref{ic} and \eqref{ic1} we conclude that the differential ${\mathrm{d}}I$ of a
solution of \eqref{ic} annihilates the vector $\mathbf{v}$. 
Moreover, the latter is true for any 1--form $\omega$ which is a non zero scalar multiple of the 
differential ${\mathrm{d}}I$, i.e.
$$ \omega = \alpha \, {\mathrm{d}}I \,. $$
This observation suggests an alternative way for constructing a complete set of functionally
independent invariants of a symmetry group $G$.
Consider $m$ independent vector fields $\mathbf{v}_1,\mathbf{v}_2 \, \ldots, \mathbf{v}_m$, where 
$$ \mathbf{v}_\mu = \sum_{\# J=j=0}^k Q_{j,\mu}\,\partial_{f_j}  \,, \qquad \mu=1,2,\ldots m\,,$$
which generate a connected $m$--dimensional Lie group $G$.
We are looking for functions $I:{\rm J}^{(k)}\rightarrow \mathbb{C}$  which satisfy the 
invariance conditions 
$$\mathbf{v}^{(k)}_1(I)=0\,, \ldots, \mathbf{v}^{(k)}_m(I)=0 \,, $$
for fixed order $k$. In the language of forms the problem is interpreted to determine a 
set of $k-m$ independent 1--forms 
$$ \omega_{\nu} = \sum_{\# J=j=0}^k \omega_{ j,\nu} \, {\rm d} f_j \,, \qquad 
\nu = 1,2,\ldots , k-m\,, $$
which annihilate these vectors fields, i.e. 
$\langle \omega_\nu \, , \mathbf{v}_\mu \rangle =0 $. The latter condition is equivalent 
to the following linear system of equations
\begin{equation} \label{linsys}
 \sum_{j=0}^k Q_{j,\mu} \,  \omega_{ j,\nu} =0 \,,  \, 
\end{equation}
for $\mu=1,2,\ldots m$, and $\nu = 1,2,\ldots , k-m$. The next step is to determine integrating
factors $\alpha_{\nu\lambda}$, and functions $I_\nu$, with $\nu,\lambda=1,2,\ldots , k-m$,
such that 
$$ \omega_\nu = \sum_{\lambda=1}^{k-m} \alpha_{\nu\lambda} \, {\mathrm d} I_\lambda \,,$$
and $\det (\alpha_{\nu\lambda}) \neq 0$. Then \eqref{ic1} implies that the $k-m$ 
functions $I_\lambda$ form a complete set of functionally independent (absolute) invariants of 
the Lie symmetry group $G$ generated by the vector fields $\{\mathbf{v}_\mu\}$. 

Relaxing the condition of absolute invariance under a symmetry group, for some vector fields,
we arrive at the
notion of relative, or semi--invariants, i.e. functions  $I:{\rm J}^{(k)}\rightarrow \mathbb{C}$  
which satisfy 
\begin{equation} \label{icsemi}
\mathbf{{v}}_i^{(k)}(I) = \beta_i \, I \,, 
\end{equation}
where each $\beta_i$ is called the index of the semi--invariant. 
Invariants are semi invariants of index 0. 

To be more specific we consider now both kinds of invariants under the symmetry group of
lattice AKP and BKP equations. Lattice AKP and BKP are the following 
discrete equations
$$ -f_1 \, f_{23} + f_2 \, f_{13} - f_3 \, f_{12} = 0  \,, \qquad 
f \, f_{123} -f_1 \, f_{23} + f_2 \, f_{13} - f_3 \, f_{12} = 0 \,,$$
respectively. Both equations admit the same (abelian) Lie symmetry group $G$ generated by 
the vector fields 
\begin{equation} \label{symakp}
 \mathbf{v}_1 = f\,\partial_f \,, \qquad \mathbf{v}_2 = (-1)^{n_1+n_2+n_3} 
\, f \,\partial_f \,.
\end{equation}
The corresponding symmetry transformations are the usual and alternating scalings 
$$ f \mapsto e^{\varepsilon} \, f\,,\qquad f \mapsto e^{[(-1)^{n_1+n_2+n_3}]\varepsilon} \, f\,.$$
respectively. 
Moreover, both discrete equations remain invariant under the action of the discrete symmetry 
$$D:f\mapsto (-1)^{n_1+n_2+n_3} \, f \,.$$
In the following we denote the alternating factor $(-1)^{n_1+n_2+n_3}$ by $\alpha \equiv (-1)^{n_1+n_2+n_3}$.
 
Let us restrict on the face of a three cube where the four fields $f,f_1,f_2,f_{12}$ are 
assigned (see Fig. \ref{fig:cube} below).
Inserting the second order prolongation of the vector fields
$$ \mathbf{v}_1^{(2)} = f\,\partial_f + f_1 \partial_{f_1} +  f_2 \partial_{f_2} + f_{12} \partial_{f_{12}} \,, \qquad 
\mathbf{v}_2^{(2)} = \alpha \, (f\,\partial_f - f_1 \partial_{f_1} - f_2 \partial_{f_2} + f_{12} \partial_{f_{12}}) \,,$$ 
to the linear system \eqref{linsys} we obtain  
\begin{equation}
\left(
  \begin{array}{rrrr} 
  f & f_1 & f_2 & f_{12} \\
  f & -f_1 & -f_2 & f_{12} \end{array} \right) 
\left(
  \begin{array}{rr} 
  {\boldsymbol{\Omega}_{1}}^t & {\boldsymbol{\Omega}_{2}}^t \\
  \end{array} \right) = 
\left(
  \begin{array}{cc} 
  0 & 0 \\
  0 & 0  \end{array} \right)   \,, 
\end{equation}
where $\boldsymbol{\Omega}_{i} = (\omega_{0,i},\omega_{1,i},\omega_{2,i},\omega_{12,i})$, $i=1,2$. 
Using a simple row reduction on the preceding linear system we get 
\begin{equation}
  \boldsymbol{\Omega}_{1} = (0,f_2,-f_1,0) \,, \qquad 
  \boldsymbol{\Omega}_{2} = (-f_{12},0,0,f) \,,
\end{equation}
and the two linearly independent 1-forms $\omega_1,\omega_2$ take the form
\begin{equation}
\omega_1 = f_2\,{\mathrm d} f_1 - f_1\,{\mathrm d} f_2  = 
{(f_2)^2} \, {\mathrm d} \left( \frac{f_1}{f_2} \right) \,, \quad 
\omega_2 = -f_{12}\,{\mathrm d} f + f \,{\mathrm d} f_{12}  = 
{f^2} \, {\mathrm d} \left( \frac{f_{12}}{f} \right) \,,
\end{equation}
respectively. Hence, 
\begin{equation} \label{invakp}
I_1=\frac{f_1}{f_2}\,, \qquad I_2=\frac{f_{12}}{f}\,,
\end{equation}
form a complete set of (absolute) invariants under the action of $G$ on the
plaquette of the cube where $(f,f_1,f_2,f_{12})$ are assigned. Note that $I_1$ and $I_2$
remain invariant under the action of the discrete symmetry $D$, as well, $D(I_i)=I_i$, $i=1,2$.

Next, we search for semi--invariants satisfying the relations 
$\mathbf{v}_1(S)=0$, $\mathbf{v}_2(S)=\beta\,S$, for some function $\beta$ independent of $f$ and 
its shifts. By similar considerations as above, a complete set of semi--invariants under 
$G$ on ${\mathrm{J}}^{(2)}=(f,f_1,f_2,f_{12})$ satisfying $
\mathbf{v}_1(S)=0$,  $\mathbf{v}_2(S)=2\,\alpha \, S$, is given by 
$$ S_1 = \frac{f}{f_1} \,, \qquad S_2 = \frac{f_{12}}{f_2}\,. $$
If we further demand (absolute) invariance under the action of the discrete symmetry 
$D$, then we can combine $S_1$, and $S_2$ to obtain the following semi--invariant 
\begin{equation} \label{semiakp}
 S_3= \frac{f\,f_{12}}{f_1 \, f_2} \,, 
 \end{equation}
on ${\mathrm{J}}^{(2)}$, which satisfies $\exp (\varepsilon \, \mathbf{v}_1) S_3=S_3=D(S_3)$, 
and $\exp (\varepsilon \, \mathbf{v}_2)  S_3=e^{4\,\alpha\,\varepsilon} \, S_3$.

\section{Tetrahedron maps from integrable 3D lattice equations} \label{class}
\subsection{AKP ($\chi_1$) and BKP}
We consider first the case of (absolute) $G$--invariants of AKP and BKP lattice equation. 
As mentioned above AKP, or equation denoted as $\chi_1$ in classification \cite{absocta}, reads 
\begin{equation}
 -f_1 \, f_{23} + f_2 \, f_{13} - f_3 \, f_{12} = 0  \,,
\end{equation}
and admits the Lie symmetry group $G$ generated by the vector fields \eqref{symakp}.
On each elementary plaquette of the 3 cube, for example the one with vertices 
$V_{12}=\{f,f_1,f_2,f_{12}\}$, the number of functionally independent (absolute) 
invariants is two, since the action of $G$ on the space of four variables of $V_{12}$ has 
two dimensional orbits. Choosing an orientation, we take as $G$--invariants the diagonals 
\eqref{invakp}, on all faces of the cube, namely
\begin{equation} \label{inv_ratios}
\begin{array}{*3{>{\displaystyle}l}}
(x_1,y_1) = \left( \frac{f_1}{f_2},\frac{f_{12}}{f} \right) \,, & 
(x_2,y_2) = \left( \frac{f_{12}}{f_{23}}, \frac{f_{123}}{f_2} \right)\,, &
(x_3,y_3) = \left( \frac{f_2}{f_3}, \frac{f_{23}}{f} \right) \,, \\ [0.5cm]
(u_1,v_1) = \left( \frac{f_{13}}{f_{23}},\frac{f_{123}}{f_3} \right)\,, &
(u_2,v_2) = \left( \frac{f_{1}}{f_{3}}, \frac{f_{13}}{f} \right)\,, &
(u_3,v_3) = \left( \frac{f_{12}}{f_{13}}, \frac{f_{123}}{f_1} \right) \,.
\end{array}
\end{equation}
The above invariants are not functionally 
independent, since the Jacobian matrix, has rank six. 
On the other hand, since $G$ has two dimensional orbits, the number of 
functionally independent invariants on ${\mathrm J}^{(3)}$ is precisely $8-2=6$, the number of
vertices of the three cube minus the dimension of $G$--orbits. Thus, our aim is to construct a set
of six functionally independent relations among the twelve invariants \eqref{inv_ratios}, 
which will serve to characterise a complete set of functionally independent invariants on 
${\mathrm J}^{(3)}$. Eliminating the $f_J$'s among the defining relations \eqref{inv_ratios}, we
obtain the following six functionally independent relations:
\begin{equation} \label{akp_set1}
y_1 = x_2\,y_3  \,, \quad v_1 = u_2\,v_3 \,, \quad v_1 = x_3\,y_2 \,, 
\quad v_2 = y_3\, u_1 \,,\quad x_2=u_1\,u_3\,, \quad u_2=x_1\,x_3\,.
\end{equation}
Moreover, since $G$ is a symmetry group of AKP, the latter lattice equation can be written in 
terms of the invariants $I^i$, namely
\begin{equation} \label{akp_set2}
  x_2 + u_2 = u_1\, x_3 \,.
\end{equation} 
Now we view the first relation of the set \eqref{akp_set1} as a compatible constraint and consider 
the system formed by the rest five relations from \eqref{akp_set1} together with \eqref{akp_set2}. 
Solving the system for $(u_i,v_i)$, we obtain the unique solution
\begin{equation} \label{ft_akp1}
\begin{array}{*3{>{\displaystyle}l}}
u_1 = \frac{(x_2 + x_1\, x_3)}{x_3} \,, & u_2 = x_1 \, x_3\,, & 
u_3 = \frac{x_2\, x_3}{x_2 + x_1\, x_3} \,, \\ [0.5cm]
v_1 = x_3\, y_2\,, & v_2 = \frac{(x_2 + x_1\, x_3)\,y_3}{x_3}\,, & 
v_3 = \frac{y_2}{x_1}\,. \end{array}
\end{equation}      
The solution can be interpreted as a map 
$R:\left( {x_1\choose y_1} , {x_2\choose y_2} , {x_3\choose y_3} \right) \mapsto 
\left( {u_1\choose v_1} , {u_2\choose v_2} , {u_3\choose v_3} \right)$. 
The 4D consistency property  of AKP, and the compatible symmetry invariants, 
imply that the map $R$ satisfies the functional tetrahedron relation.

\begin{figure}[h!]
  \centering
  \def\svgscale{0.4}
  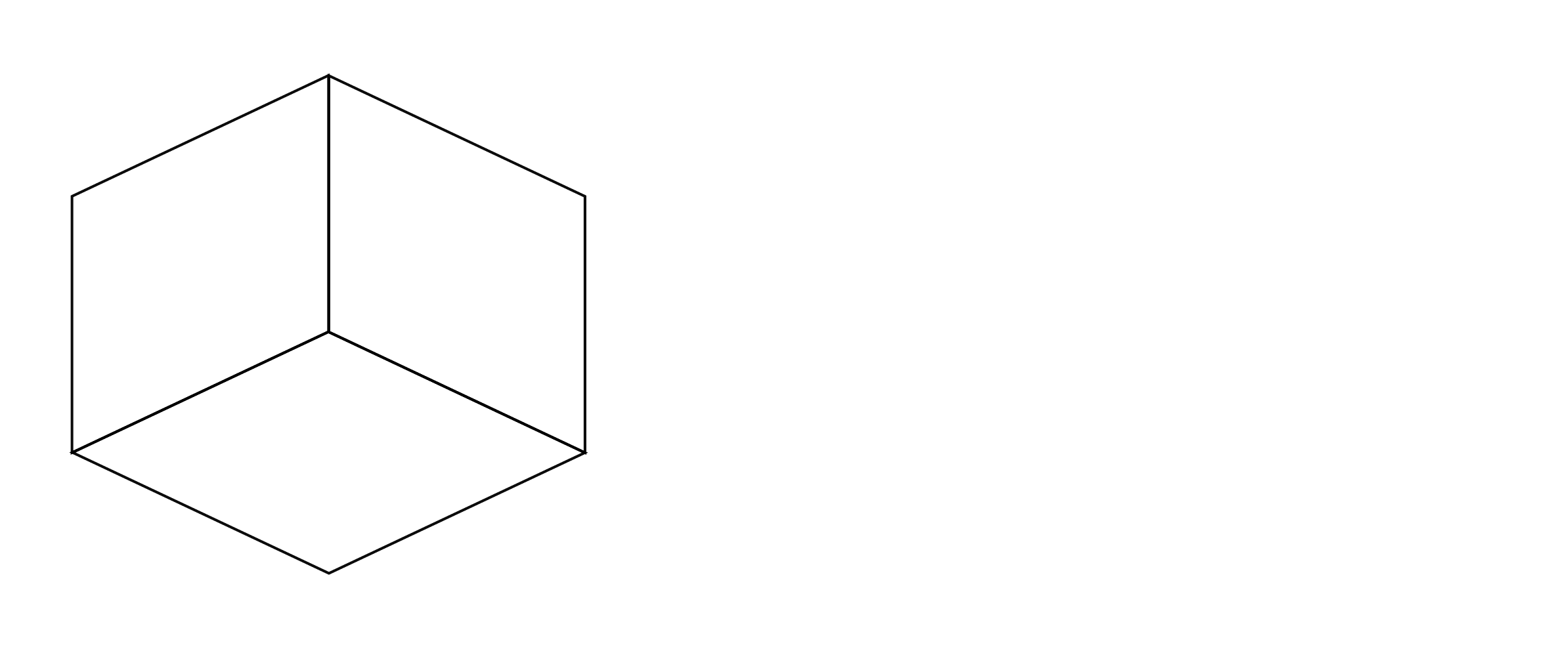
    \caption{$R:\left( {x_1\choose y_1} , {x_2\choose y_2} , {x_3\choose y_3} \right) \mapsto 
\left( {u_1\choose v_1} , {u_2\choose v_2} , {u_3\choose v_3} \right)$}
    \label{fig:cube}
\end{figure}

We note that the above map $R$ has a triangular form in which the $x$-component
$R^x : (x_1,x_2,x_3) \mapsto (u_1,u_2,u_3)$ decouples from the $y$ components of the map. 
The map $R^x$ is map (19) in the classification obtained by Sergeev \cite{sergeev1998},
thus tetrahedron map \eqref{ft_akp1} can be regarded as a vector extension of it.
Moreovet, $R^x$ is reversible i.e. it satisfies the relation 
$\tau_{13}\, R^x \, \tau_{13} \, R^x = \mathrm{Id}$. 

The map $R$, as it stands in \eqref{ft_akp1}, satisfies the tetrahedron equation without taking 
into account any compatible constraint. However, $R$ is not invertible. 

One can obtain a map $\widetilde{R}$ by invoking the invariant relations \eqref{akp_set1}, 
but this time taking into account the leftover relation $y_1=x_2\,y_3$, and at the same time 
dropping out the second relation $v_1=u_2\,v_3$ from the set \eqref{akp_set1}. 
Then we can solve the latter set together with \eqref{akp_set2}, uniquely, 
to get $\widetilde{R}$ defined by
\begin{equation} \label{Rakpinv}
\begin{array}{*3{>{\displaystyle}l}}
x_1 = \frac{u_1 \, u_2}{u_2 + u_1 \, u_3} \,, & x_2 = u_1 \, u_3\,, & 
x_3 = \frac{u_2 + u_1\,u_3}{u_1} \,, \\ [0.5cm]
y_1 = u_3\, v_2\,, & y_2 = \frac{u_1\,v_1}{u_2 + u_1\, u_3}\,, & 
y_3 = \frac{v_2}{u_1}\,. \end{array}
\end{equation} 
Then 
$$\widetilde{R} \circ R\, (x_1,y_1,x_2,y_2,x_3,y_3)=(x_1, x_2\, y_3, x_2, y_2, x_3, y_3)$$ 
which is the identity  map by taking into account the relation $y_1=x_2\,y_3$. Similarly, 
$$R \circ \widetilde{R}(u_1,v_1,u_2,v_2,u_3,v_3)=(u_1, v_1, u_2, v_2, u_3, \frac{v_1}{u_2})$$ 
which is again the identity map by taking into account the second relation in \eqref{akp_set1}, 
i.e. $v_1 = u_2\,v_3$. The reasoning behind this fact is that AKP satisfies the 4D consistency 
property, and its symmetry invariants are by definition compatible with AKP, hence any constraint 
among the $G$--invariants propagate on the whole $\mathbb{Z}^3$ in a consistent way, 
allowing one to solve the functional relations among the invariants in both directions 
$R:(x_i,y_i) \leftrightarrows (u_i,v_i):\widetilde{R}$. Moreover, by embedding the lattice equation 
and their compatible $G$--invariants on $\mathbb{Z}^4$, then the consistency property of the lattice 
equation implies that both $R$ and $\widetilde{R}$ satisfy the tetrahedron relation. 
To be more concrete, notice that $R$ maps the symmetry constraint $y_1=x_2\,y_3$, on the 
left half cube of Fig. \ref{fig:cube}, to $v_1=u_2\,v_3$, which lives on the right half cube, 
and can be verified readily from \eqref{ft_akp1}, and vice versa from \eqref{Rakpinv}. 

Next we consider BKP lattice equation which reads 
\begin{equation}
f\,f_{123} - f_1 \, f_{23} + f_2 \, f_{13} - f_3 \, f_{12} = 0  \,.
\end{equation}
Since BKP and AKP share the same symmetries we take the same 
$G$-invariants \eqref{inv_ratios}, together with the corresponding functional relations
\eqref{akp_set1}. In terms of its $G$-invariants BKP now reads
\begin{equation} \label{bkpinv}
1 - \frac{y_3}{v_3} + \frac{v_2}{y_2} - \frac{y_1}{v_1}=0  \,,
\end{equation}
and from the unique solution of the corresponding system we obtain the tetrahedron map
$R:\left( {x_1\choose y_1} , {x_2\choose y_2} , {x_3\choose y_3} \right) \mapsto 
\left( {u_1\choose v_1} , {u_2\choose v_2} , {u_3\choose v_3} \right)$ defined by the relations
\begin{equation} \label{ft_bkp1} 
\begin{array}{*3{>{\displaystyle}l}}
u_1 = x_1 + \frac{y_1}{x_3\,y_3} - \frac{y_2}{y_3}, & u_2 = x_1 \, x_3\,, & 
u_3 = \frac{x_2\, x_3\,y_3}{y_1-y_2\, x_3 + x_1\, x_3 \, y_3} \,, \\ [0.5cm]
v_1 = x_3\, y_2\,, & v_2 = \frac{y_1}{x_3} + x_1\, y_3 - y_2\,, & v_3 = \frac{y_2}{x_1}\,. \end{array}
\end{equation} 
The map is coupled and the tetrahedron property is satisfied if
we take into account the symmetry constraint $y_1=x_2\,y_3$ on the auxiliary space $(123)$ and the
corresponding constraint on (356), reflecting the fact that $R$ is a tetrahedron map modulo the 
symmetry constraint. 

We now consider the semi--invariants of AKP and BKP. In both cases we take on all faces of 
the cube the following semi--invariants
\begin{equation} \label{semi}
s_1=\frac{f\,f_{12}}{f_1\,f_2}\,,\quad
s_2=\frac{f_2\,f_{123}}{f_{12}\,f_{23}}\,,\quad s_3=\frac{f\,f_{23}}{f_2\,f_3}
\,, \quad t_1=\frac{f_3\,f_{123}}{f_{13}\,f_{23}} \,, \quad  
t_2=\frac{f\,f_{13}}{f_{1}\,f_{3}}\,,\quad t_3=\frac{f_1\,f_{123}}{f_{12}\,f_{13}}\,,  
\end{equation} 
which are functionally related by 
\begin{equation} \label{funct}
 s_1\, s_2 = t_1\,t_2\,,\qquad s_2\,s_3=t_2\,t_3 \,.
\end{equation}
The crucial point is that AKP can be written in terms of the above semi--invariants as follows
\begin{equation}  \label{okp} 
t_2 = s_1+s_3 \,.
\end{equation}
Solving (\ref{funct}), (\ref{okp}) for
$(t_1,t_2,t_3)$ we obtain the tetrahedron map $R:(s_1,s_2,s_3)\mapsto (t_1,t_2,t_3)$, defined by
\begin{equation} \label{octakp}
t_1=\frac{s_1\,s_2}{s_1+s_3}\,,\qquad t_2= s_1+s_3 \,,\qquad t_3=\frac{s_2\,s_3}{s_1+s_3}\,.
\end{equation}
The above map appears also in the classification \cite{sergeev1998} by Sergeev, as a limit of some 
other tetrahedron map, and is both an involution and a reversible tetrahedron map. 

On the other hand, BKP can be written in terms of the semi--invariants \eqref{semi} as
\begin{equation} \label{bkpinv2}
  s_1 \, s_2 \, s_3 - s_2 + t_2 -s_1 =0 \,,
\end{equation}
and from \eqref{funct}, \eqref{bkpinv2} we obtain the well--known electric network transformation
\begin{equation}
t_1=\frac{s_1\,s_2}{s_1+s_3 - s_1\,s_2\,s_3}\,,\qquad t_2= s_1+s_3 - s_1\,s_2\,s_3\,,
\qquad t_3=\frac{s_2\,s_3}{s_1+s_3-s_1\,s_2\,s_3}\,,
\end{equation}
The above map appeared first in \cite{kas1996} and as was mentioned earlier is both an involution 
and a reversible tetrahedron map.

\subsection{Schwarzian KP ($\chi_2$)}
Equation $\chi_2$ in ABS classification, or the generalised lattice spin equation (Nijhoff and Capel) 
is the following lattice equation
\begin{equation}  \label{latticex2}
\frac{f_1-f_{12}}{f_{12}-f_2} \,\, \frac{f_2-f_{23}}{f_{23}-f_3} \,\, \frac{f_3-f_{13}}{f_{13}-f_1} = -1 \,.
\end{equation}
It is invariant under fractional linear (M\"obius) transformations 
$$f\mapsto \frac{a\,f+b}{c\,f+d}\,,$$ 
with the corresponding infinitesimal symmetry generators of the $sl(2)$ symmetry being
$$X_1=\partial_f\,,\qquad X_2 = f\,\partial_f \,, \qquad X_3=f^2\,\partial_f\,.$$ 
The transformation $f \mapsto f^{-1}$ maps $X_1$ to $X_3$, thus essentially  there exists only one 
two--dimensional symmetry subgroup $H$, the one generated by the subalgebra $\{X_1,X_2 \}$. 
We choose on every face of the 3 cube the following invariants $I_i$ along the $H$--orbits 
\begin{equation}
\begin{array}{*3{>{\displaystyle}l}} \label{invS1}
(x_1,y_1) = \big( \frac{f-f_2}{f-f_1},\frac{f_1-f_{12}}{f_2-f_{12}} \big) \,, & 
(u_1,v_1) = \big( \frac{f_{3}-f_{23}}{f_{3}-f_{13}},\frac{f_{13}-f_{123}}{f_{23}-f_{123}} \big)\,, \\ [0.5cm]
(x_2,y_2) = \big( \frac{f_2-f_{23}}{f_2-f_{12}}, \frac{f_{12}- f_{123}}{f_{23}-f_{123}} \big)\,, &
(u_2,v_2) = \big( \frac{f-f_3}{f-f_1} , \frac{f_1-f_{13}}{f_3-f_{13}} \big)\,,  \\ [0.5cm]
(x_3,y_3) = \big( \frac{f-f_3}{f-f_2}, \frac{f_2-f_{23}}{f_3-f_{23}} \big) \,, &
(u_3,v_3) = \big( \frac{f_1-f_{13}}{f_1-f_{12}} , \frac{f_{12}-f_{123}}{f_{13}-f_{123}} \big) \,.
\end{array}
\end{equation}
The rank of the Jacobian matrix $(\partial I_i/\partial f_j)$ is $6$, hence eliminating the 
$f_j$'s among the defining relations of the above invariants, we obtain the following six 
functionally independent relations:
\begin{equation} \label{x2_set1}
\begin{array}{lll} 
  x_2 (1 - \frac{1}{x_1} )  (1-\frac{1}{y_3}) = (1-x_3) (1-y_1)\,, 
& y_2 = v_1 \, v_3 \,, 
& y_3  (1 - \frac{1}{x_2} )  (1-v_1) = (1-y_2) (1-\frac{1}{u_1})\,, \\ [0.5cm]
  v_2  (1 - \frac{1}{v_1} )  (1-\frac{1}{u_3}) = (1-v_3) \, (1-u_1)\,,
& u_2 = x_1 \, x_3 \,, 
& u_3 (1 - \frac{1}{v_2} ) (1-x_1) = (1-u_2) (1-\frac{1}{y_1})\,. \\ [0.5cm] 
\end{array}
\end{equation}
Lattice equation \eqref{latticex2} can be written in terms of its symmetry invariants in many ways, such as
\begin{equation} \label{x2_set2}
\begin{array}{*2{>{\displaystyle}l}}
v_2 = y_1 \, y_3 \,, & (1 - \frac{1}{x_2} ) (1-\frac{1}{x_3}) = (1-\frac{1}{u_2}) (1-\frac{1}{u_3})\,, \\ [0.5cm]
x_2 = u_1 \, u_3 \,, & (1 - \frac{1}{y_2} ) (1-\frac{1}{y_3}) = (1-\frac{1}{v_2}) (1-\frac{1}{v_3})\,. \\
\end{array}
\end{equation}
Again, we view the first relation from the set of equations \eqref{x2_set1} as a 
compatible constraint and consider the remaining equations together with any of the 
invariant forms of the lattice equation from the set
\eqref{x2_set2}. Taking into account the compatible constraint \eqref{x2_set1}(i)
all corresponding systems can be solved uniquely for $(u_i,v_i)$, 
leading to the same solution
\begin{equation} \label{ftx2}
\begin{array}{*3{>{\displaystyle}l}}
u_1 = \frac{x_1 + x_2  - x_1\,x_2 - x_1\,x_3}{1-x_1\,x_3}, & 
u_2 = x_1 \, x_3\,, & 
u_3 = \frac{x_2\,(1-x_1\,x_3)}{x_2 + x_1  - x_1\,x_2 - x_1\,x_3} \,, \\ [0.5cm]
v_1 = \frac{y_1 + y_2  - y_1\,y_2 - y_1\,y_3}{1-y_1\,y_3},   &
v_2 = y_1 \, y_3\,, &
v_3 = \frac{y_2\,(1-y_1\,y_3)}{y_2 + y_1  - y_1\,y_2 - y_1\,y_3} \,. \end{array} 
\end{equation}
The above map decouples into two identical involutions $R = R^x \times R^y$. 
Conjugating $R^x$ by $x\mapsto 1-x$
the resulting map becomes the tetrahedron map (25) in Sergeev's classification 
\cite{sergeev1998}. 

There exists also a degeneration of the lattice equation \eqref{latticex2} which is 
multi dimensional consistent and was introduced by Nijhoff and Capel in \cite{nc1990}, 
under the name generalized lattice modified Toda equation. It reads the form
\begin{equation} \label{latticex2b}
f_2\,(f_2-f_{23})^{-1}\,(f_{23}-f_3) = (f_{12}-f_1)\,(f_1-f_{13})^{-1}\,f_3 \,,
\end{equation} 
and is obtained form (\ref{latticex2}) by considering first a gauge transformation 
$f\mapsto \alpha^n\,\beta^m\,\gamma^k \, f$ and then taking $\alpha \rightarrow 0$ 
and $\beta=\gamma=1$.
In this case, only the scaling symmetry is admitted by the lattice equation.
However, it is interesting to note that the degenerate invariants under the
remaining scaling symmetry can still be used, in a consistent way, 
to obtain a tetrahedron map which it turns out to be again
the tetrahedron map \eqref{ftx2} with vanishing $y$ and $v$ variables.

\subsection{Discrete potential KP ($\chi_3$)} \label{kpx3}
Equation $\chi_3$ in ABS classification, or in the non--commutative case, the
discrete potential KP (dpKP) 
\begin{equation} \label{dpKP}
 f_{12}\, (f_1 - f_2) + f_{23}\, (f_2 - f_3) + f_{13}\, (f_3 - f_1)=0 \,.
\end{equation}
The dpKP can be written in the following symmetric form
\begin{equation} \label{KP_symm}
  \frac{f_{12}-f_{23}}{f_1-f_3} = \frac{f_{13}-f_{23}}{f_1-f_2} = \frac{f_{12}-f_{13}}{f_2-f_3} \,.
\end{equation}
It is invariant along the orbits of the Lie group $G$ generated by the symmetry vector
fields
$$ X_1=\partial_f\,, \quad X_2=(-1)^{n+m+k} \partial_f\,,  \quad 
X_3= f\,\partial_f\,, \quad X_4= (-1)^{n+m+k} \,f\,\partial_f \,. $$  
The non--vanishing commutators of the infinitesimal generators are
$$ [X_1,X_3]= X_1\,, \quad [X_1,X_4]= X_2 \,, \quad [X_2,X_3]=X_2\,,\quad [X_2,X_4]=X_1 \,, $$
spanning a Lie algebra $\frak{g}$, classified already by Sophus Lie \cite[pp. 727]{lie1893}.
The corresponding Lie group $G$ is isomorphic to the solvable group of complex affine transformations 
of the complex line
$$ G \cong {\rm{Aff}}(\mathbb{C}) = \{ z \mapsto a z+b \,\,  |\,\, \,(a,b)\in
\mathbb{C}^{\ast}\times \mathbb{C} \} \,.$$ 
Lie group $G$ has four two--dimensional subgoups generated by the subalgebras 
$A_1=\{X_1,X_2\}$, $A_2=\{X_3,X_4\}$,
(abelian) and $S_1=\{X_1,X_3\}$, $S_2=\{X_2,X_3\}$ (solvable). However, the
discrete symmetry $D: f \mapsto (-1)^{n+m+k} \, f$ of lattice potential KP \eqref{dpKP}, maps the
infinitesimal generators spanning subalgebra $S_1$ to those spanning $S_2$. 
Thus, up to the discrete symmetry $D$, there exist three nonequivalent two dimensional subalgebras 
namely, $A_1, A_2$, and $S_1$. In the following, for each subalgebra, we derive the corresponding 
tetrahedron maps.

\vspace{0.5cm}\noindent
{\em Case a:} 
We choose invariants $I_i$ along the orbits of the subgroup generated by $A_1$, the diagonals
\begin{equation} \label{translation} 
\begin{array}{*3{>{\displaystyle}l}}
(x_1,y_1) = \big( {f_1}-{f_2}\,,\,{f_{12}}-{f} \big)\,, & 
(u_1,v_1) = \big( {f_{13}}-{f_{23}}\,,\,{f_{123}}-{f_3} \big) \,,  \\ [0.5cm]
(x_2,y_2) = \big( {f_{12}}-{f_{23}}\,,\, {f_{123}}-{f_2} \big)\,, &
(u_2,v_2) = \big( {f_{1}}-{f_{3}}\,,\,{f_{13}}-{f} \big) \,, \\ [0.5cm]
(x_3,y_3) = \big( {f_2}-{f_3}\,,\, {f_{23}}-{f} \big)\,, &
(u_3,v_3) = \big( {f_{12}}-{f_{13}}\,,\, {f_{123}}-{f_1} \big) \,.
\end{array}
\end{equation}
The rank of the Jacobian matrix $(\partial I_i/\partial f_j)$ is $6$, thus we obtain the 
following six functionally independent relations
\begin{equation}\label{x3_set1}
y_1=x_2+y_3\,,\quad v_1=u_2+v_3\,,\quad y_1 = u_3+v_2\,,\quad y_2=x_1+v_3\,,\quad  
x_2=u_1+u_3\,,\quad u_2=x_1+x_3\,.
\end{equation}
In terms of the above invariants dpKP is written as
\begin{equation} \label{x3_inv}
 \frac{x_2}{u_2}\, =\, \frac{u_1}{x_1}\, =\, \frac{u_3}{x_3}\,. 
\end{equation}
Solving the set of equation formed by \eqref{x3_set1}, except the first one, and either 
equation from \eqref{x3_inv} we obtain the unique solution
\begin{equation} \label{x3a}
\begin{array}{*3{>{\displaystyle}l}}
 u_1 = \frac{x_1\,x_2}{x_1+x_3}, & u_2 = x_1 + x_3\,, & u_3 = \frac{x_2\, x_3}{x_1+ x_3}\,, \\ [0.5cm]
 v_1 = x_3+ y_2\,,  & v_2 = \frac{x_1\,y_1+x_3\,y_3}{x_1+x_3} \,,& v_3 = {y_2}-{x_1}\,. \end{array} 
\end{equation}
The map is an involution (taking into account the constraint $y_1=x_2+y_3$) and has a triangular form, 
in which the $R^x$ component coincides with the tetrahedron map \eqref{octakp}.

\vspace{0.5cm}\noindent
{\em Case b:} Next, we consider the abelian subgroup generated by $A_2=\{X_3,X_4\}$ and its
associated invariants given by \eqref{inv_ratios}. 
The symmetric form \eqref{KP_symm} of dpKP equation suggests that it can can be written 
in terms of the above invariants, as
\begin{equation} \label{kp_symm_b}
  \frac{y_1-y_3}{u_2-1} \, = \, \frac{v_2-y_3}{u_2-x_3} \, = \, \frac{y_1-v_2}{x_3-1} \,.
\end{equation}
As in the previous case, dropping the first of equations \eqref{inv_ratios} and 
considering the system of the remaining equations together with any of equations 
\eqref{kp_symm_b} we obtain the unique solution
\begin{equation}
\begin{array}{*3{>{\displaystyle}l}}
 u_1 = \frac{Q}{1-x_1\,x_3} \,,  &
 u_2 = x_1\,x_3 \,, & 
 u_3 = \frac{(1-x_1\,x_3)\,x_2}{Q}\,, \\ [0.5cm]
 v_1 = x_3\,y_2\,,  & 
 v_2 = \frac{Q \,y_1}{(1-x_1\,x_3)\,x_2} \,, & 
 v_3 = \frac{y_2}{x_1} \,, \end{array} 
\end{equation}
where $ Q= 1 - x_3  + x_2\,x_3 - x_1\,x_2\,x_3$. 
The map again has a triangular form. The conjugation $x\mapsto 1+x$ transforms the 
$x$-component of the above tetrahedron map to 
\begin{equation} \label{x3b}
R^x : (x_1,x_2,x_3) \mapsto 
 \left( \frac{x_1\,x_2\,(1+x_3)}{x_1 + x_3 + x_1 \, x_3} \,,  
         x_1 + x_3 + x_1 \, x_3    \,, 
       \frac{x_2\,x_3}{x_3 + x_1\,(1 + x_2)\,(1+x_3)} \right)   \,, 
\end{equation}
which is totally positive map. 
Moreover, the preceding map is not an involution, but is reversible and invertible. 
Conjugating map \eqref{x3b} by the scaling $x\mapsto \lambda\,x$ and then 
taking $\lambda=0$, the map reduces to the $x$ component of the map of the previous case.
A final comment for the map \eqref{x3b} is that it is not covered in the classification
 \cite{sergeev1998}, and up to our knowledge is a new solution of the functional 
 tetrahedron equation.

\vspace{0.5cm}\noindent
{\em Case c:} Let us invoke the invariants of the subgroup generated by $S_1=\{X_1,X_3\}$, 
given by \eqref{invS1}, and the corresponding functionally independent relations \eqref{x2_set1}. 
In this case dpKP equation \eqref{dpKP} can be written in the invariant form
$$
\frac{1 - y_1 \,y_3}{1-y_1} = \frac{(1 - y_2)\, (1 + y_3)}{1-v_1} 
+ \frac{(1 - y_2)^2}{(1-v_1)^2}\, \frac{(v_2 - y_3)}{(1-v_2)} \,.
$$
The system of equations take a simpler form conjugating the invariants $(x_i,u_i)$
by the involution $(x_i,u_i)\mapsto (\frac{1}{x_i},\frac{1}{u_i})$ followed by a $Mob^{12}$
transformation $x\mapsto 1+x$ on all 12 invariants. The resulting system
of equations has two solutions.

The first tetrahedron map is 
\begin{equation}
\begin{array}{*3{>{\displaystyle}l}}
u_1 = \frac{x_1\,x_2\,(1+x_3)}{x_1 + x_3 + x_1\, x_3+x_2\,x_3}, &
u_2 = x_1 + x_3 + x_1\, x_3, &
u_3 = \frac{x_2\, x_3}{x_1 + x_3 + x_1\, x_3}, \\ [0.5cm]
v_1 = \frac{y_1 \,y_2}{y_1 + y_3 + y_1\, y_3}, &
v_2 = y_1 + y_3 + y_1\, y_3, &
v_3 = \frac{(1 + y_1)\, y_2 \, y_3}{y_1 + y_3 + y_1\,y_3 + y_1 \,y_2} \,.
\end{array} 
\end{equation}
The map decouples into two components $R^x$, $R^y$ which are simply related by 
$R^x=\tau_{13}\, R^y\, \tau_{13}$, and is a conjugation of tetrahedron map (25) in
Sergeev classification \cite{sergeev1998}.

The second solution leads to the following totally positive, coupled, tetrahedron map
\begin{flalign} \label{ftmaster}
& \left\lbrace
\begin{array}{*1{>{\displaystyle}l}} 
u_1 = \frac{ x_2\,y_1 \,(1+x_2)\,(1+y_3)^2}{ y_1\,(1+x_{2}) \,(1+y_3 ) + y_3} \,,  \\ [0.5cm]
v_1 = \frac{y_1\,y_2\,(1+x_2)\,(1+y_3)}{y_1\,(1+x_2)\,(1+y_3)+y_3} \,, 
\end{array} \right. \nonumber \\
& \left\lbrace
\begin{array}{*1{>{\displaystyle}l}}
u_2 = x_1+x_3+x_{1}\,x_{3} \,,  \\ [0.3cm]
v_2 = \frac{\big[ y_1\,\left(1+x_2\right)\,\left(1+y_3\right) + 
y_3\big]^{2}}{y_1\,\left(1+x_{2}\right)\,\left(1+y_{3}\right)\,\big[1+x_{2}\, \left(1+y_{3}\right) \big] + y_{3}} \,, 
\end{array} \right. \\
& \left\lbrace
\begin{array}{*1{>{\displaystyle}l}}
u_3 = \frac{x_2\,y_3}{y_1\,(1+x_2) \, \big[(1+x_2)\,(1+y_1)\,(1+y_3) + y_3 \big] + y_3}  \,, \\ [0.5cm]
v_3 = \frac{y_{2}\,y_{3}}{y_1 \, \left(1+x_2 \right) \,\left(1+y_2 \right) \,\left(1+y_3 \right) + y_3}  \,.
\end{array} \right. \nonumber
\end{flalign}
Substituting the symmetry constraint 
$$x_3\,y_1 \, (1+x_2)\, (1+y_3) = x_1\,y_3 \, (1+x_3)\,, $$ 
into the above relations and taking $y$ invariants to vanish we retrieve tetrahedron map
 \eqref{x3b}. 
Conjugating the map \eqref{ftmaster} by the scaling $s\mapsto \lambda\,s$ and then taking $\lambda=0$, 
the map reduces to the following tetrahedron map of triangular form
\begin{equation}
 R:(x_1,y_1,x_2,y_2,x_3,y_3) \mapsto
 \left( \frac{x_{2} y_{1}}{y_{1} + y_{3}}, \frac{y_{1} y_{2}}{y_{1} + y_{3}} , 
                            x_{1} + x_{3}, y_{1} + y_{3}, 
 \frac{x_{2} y_{3}}{y_{1} + y_{3}} ,  \frac{y_{2} y_{3}}{y_{1} + y_{3}}\right) \,,
\end{equation}
which has an obvious further reduction when $x$ and $y$ coincide component--wise. 
Thus tetrahedron map \eqref{ftmaster} can be regarded as a vector extension 
of tetrahedron map \eqref{x3b}, and degenerates to the simple tetrahedron map 
\eqref{octakp} in various ways. 

\subsection{Discrete modified KP ($\chi_4$) and $(\chi_5)$}
Equation $\chi_4$ in ABS classification, or the
discrete modified KP equation (Nijhoff and Capel) reads
\begin{equation}
 \frac{f_{13} - f_{12}}{f_1} + \frac{f_{12} - f_{23}}{f_2} + \frac{f_{23} - f_{13}}{f_3}=0 \,,
\end{equation}
and is invariant along the orbits of the abelian group generated by the symmetry vector
fields
$$ X_1= f\,\partial_f\,,\qquad X_2=(-1)^{n+m+k}f\, \partial_f\,.  $$  
We use the corresponding invariants \eqref{inv_ratios} since the symmetries of the two 
lattice equation are the same. Lattice equation $\chi_3$ is written in terms of the invariants 
in two different ways
\begin{equation}\label{modkp}
(1-x_1)\,(1-x_2)=(1-u_1)\,(1-u_2)\,, \, \qquad y_1\,(y_2-v_3) + y_3\,(v_1-y_2) + v_2\,(v_3 -v_1) =0 \,.   
\end{equation}
The corresponding tetrahedron map is obtained by solving the system of equations \eqref{akp_set1}, 
except the first one, along with any of the equations \eqref{modkp}, leading to the unique solution
\begin{equation} \label{ftx4}
\begin{array}{*3{>{\displaystyle}l}}
u_1 = \frac{Q}{1 - x_1\, x_3}\,, & 
u_2 = x_1 \, x_3\,, & 
u_3 = \frac{x_2\, (1 - x_1\, x_3)}{Q}\,, \\ [0.5cm]
v_1 = x_3\, y_2\,,   &
v_2 = \frac{Q\,y_3}{1 - x_1\, x_3} \,, &
v_3 = \frac{y_2}{x_1}\,, \end{array} 
\end{equation}
where $Q=x_2 + x_1\,(1 - x_2 - x_3)$. 
Under the conjugation $x\mapsto 1-x$, the map $R^x$ is gauge equivalent to tetrahedron map (25) 
in Sergeev classification. However, in contrast to its preceding appearance for the previous lattice 
equations now the vector map has a coupled triangular form, and it is an involution. 
  
Finally, equation $\chi_5$ in ABS classification, or the generalized lattice Toda equation 
(Nijhoff and Capel) is the lattice equation
\begin{equation} \label{eqx5}
 \frac{f_{13} - f_{23}}{f_3} = f_{12}\,\left(\frac{1}{f_2}-\frac{1}{f_1}\right) \,.
\end{equation}
It is a degeneration of equation $\chi_4$ by the limiting procedure
$f\mapsto \alpha^m\,\beta^n\,\gamma^k\,f$, and then taking $\gamma \rightarrow 0$, and 
$\alpha = \beta = 1$. 
The symmetries of the degenerated lattice equation \eqref{eqx5} are preserved, 
and $\chi_5$ is written in terms of the invariants
\eqref{inv_ratios} in the following forms
$$u_2 \, (1-u_1)=x_2\,(1-x_1) \,, \qquad y_1\,(v_3-y_2) + v_1\,(v_2-y_3)=0  \,.$$
The map obtained by solving the associated system of equations is
\begin{equation} \label{ftx5}
\begin{array}{*3{>{\displaystyle}l}}
u_1 = \frac{Q}{x_1\, x_3}\,, & 
u_2 = x_1 \, x_3\,, & 
u_3 = \frac{x_1\,x_2\, x_3}{Q}\,, \\ [0.5cm]
v_1 = x_3\, y_2\,,   &
v_2 = \frac{ Q \,y_3}{x_1\, x_3} \,, &
v_3 = \frac{y_2}{x_1}\,. \end{array} 
\end{equation}
where $Q= (x_1-1)\, x_2 + x_1 \, x_3$. 
The map (\ref{ftx5}) does not satisfy the tetrahedron relation on its own. Instead 
the following entwining relation  holds
\begin{equation}
  R_{\chi_4}^{(123)}\, R_{\chi_5}^{(145)}\, R_{\chi_5}^{(246)}\,R_{\chi_5}^{(356)} \, 
= R_{\chi_5}^{(356)}\, R_{\chi_5}^{(246)}\, R_{\chi_5}^{(145)}\,R_{\chi_4}^{(123)} , 
\label{eq:YBrel}
\end{equation}
where $R_{x_4}^{(123)}$, $R_{x_5}^{(246)}$ etc. denote the maps  (\ref{ftx4}),
(\ref{ftx5}), obtained by the lattice
equations ($\chi_4$), ($\chi_5$), respectively.

\section{Multi--dimensional and multi--field generalizations} \label{multi}
\subsection{Lattice potential KP on $\mathbb{Z}^4$} 
We take now the opportunity to derive the 4D consistency of a non--commutative version of dpKP 
that we are going to use later on.
We consider a hypercube in $\mathbb{Z}^4$ and copies of {\it dpKP} equation on each 
of its cubic faces that we label as follows:
\begin{equation} 
dpKP(ijk)
\equiv f_{ij}( f_i- f_j)+f_{jk}( f_j- f_k)+f_{ik}( f_k- f_i)=0 \, , \label{eq:KPijk}
\end{equation} 
and  

\begin{equation} dpKP(ijk)_l \equiv f_{ijl}( f_{il}- f_{jl})+f_{jkl}( f_{jl} - 
f_{kl})+f_{ikl}( f_{kl}- f_{il})=0 \,,
\label{eq:KPijkl}
\end{equation}
for their shifted versions in the remaining fourth direction respectively, where the 
indices $i,j,k,l$ satisfy $\left\{ i,j,k,l \right\}=\left\{ 1,2,3,4 \right\}, i<j<k.$ 
\footnote{Double and triple indices containing $l$ are reordered accordingly 
(i.e. for $l=1$, $f_{21}\equiv f_{12}$, $f_{241}\equiv f_{124}$ etc.)}
$dpKP$ equation is consistent on the 4D hypercube in the following sense:
\begin{figure}[h!]
\centering
  \def\svgscale{0.085}
  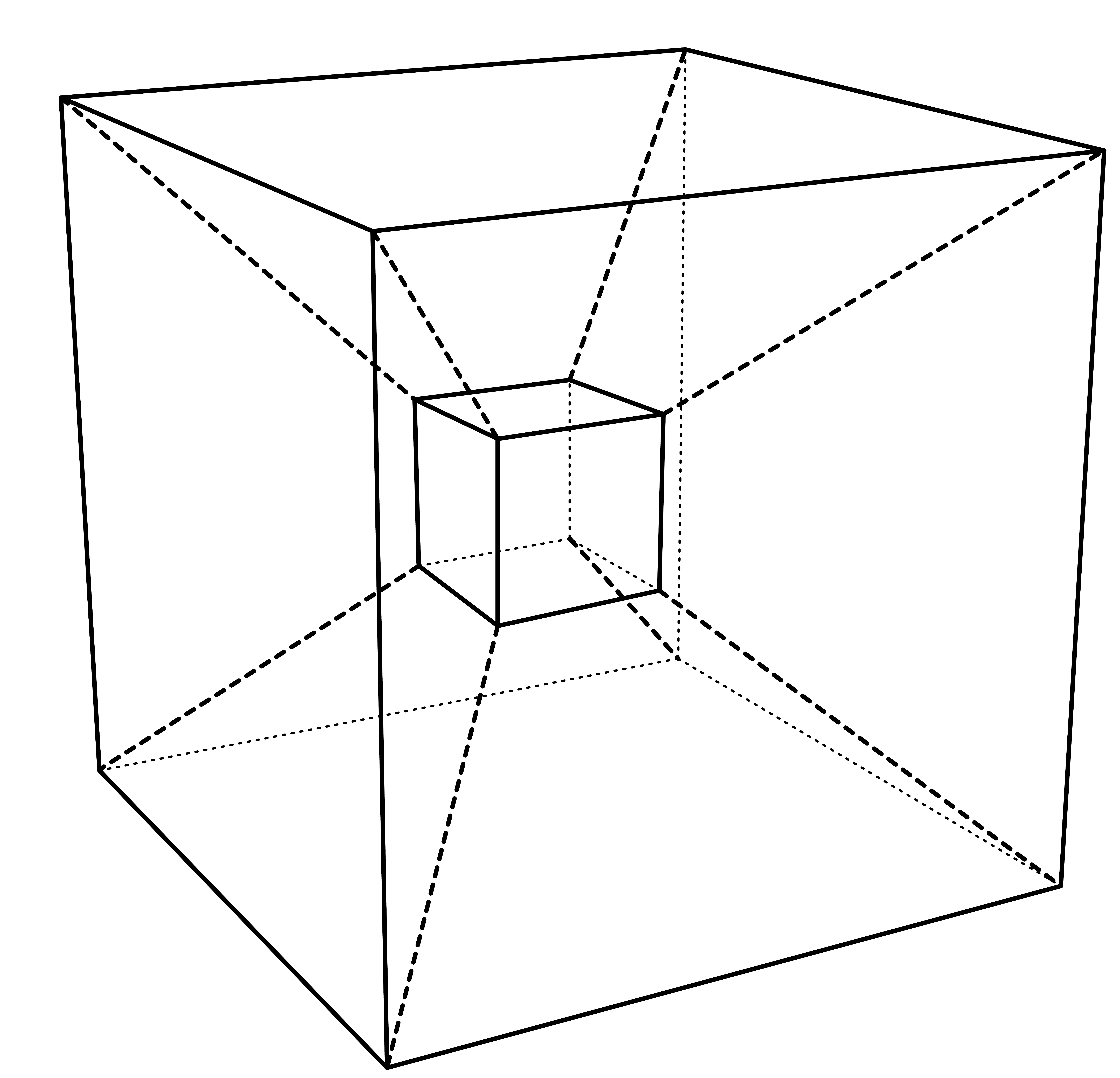
  \caption{dKP equation on $\mathbb{Z}^4.$ }
  \label{fig:tessera}
\end{figure}

Consider generic initial values $ f_1, f_2,  f_3, f_4,$  $ f_{13}, f_{23},
f_{14} $, $f_{123}, f_{134}$ on the nine vertices of the
hypercube (Fig. \ref{fig:tessera}) and the set of equations on all eight cubic faces 
partitioned in two subsets: 
$$ \mathbf {S}_{\mathbf {ILDB}}= \left\{ dpKP(123)=0, dpKP(234)_1=0, dpKP(124)=0, dpKP(134)_2=0  \right\}$$
and 
$$\mathbf {S}_{\mathbf {FURO}}=\left\{ dpKP(134)=0,  dpKP(124)_3=0, dpKP(234)=0, dpKP(123)_4=0  \right\}. $$
The first set contains the {\it `In, `Left', `Down', `Back'} cubes and the second the 
{\it `Front', `Up', `Right', `Out'} ones.
$ f_{12}, f_{24}$, $f_{124}, f_{234}$ can be calculated using $\mathbf {S}_{\mathbf {IFDR}}$ and 
$ f_{34}, f_{24}$, $f_{234}, f_{124}$ can be calculated using $\mathbf {S}_{\mathbf {LUBO}}$  
in terms of the 9 values given  initially. The two different ways of calculating the values of 
$f_{24}, f_{124}, f_{234}$ yield the same result. In particular we have:
\begin{subequations}
\begin{eqnarray}
f_{12}&=& \left[f_{13}(f_{1}-f_{3})+f_{23}(f_3-f_2)\right]( f_1 -f_{2})^{-1} \,, \label{f12}
\\
f_{34}&=& [f_{13}(f_{1}-f_{3})+f_{14}(f_4-f_1)](f_4-f_{3})^{-1}\,, \label{f34}
\\
f_{24}&=& [f_{13}(f_{1}-f_{3})+f_{23}(f_3-f_2)+f_{14}(f_4-f_1)](f_4 -f_{2})^{-1}\,, \label{f24}
\\
f_{234}&=&  [f_{134}(f_{14}-f_{13})(f_4-f_1)- f_{123}(f_{13}-f_{23}) (f_3-f_4)]\times \nonumber \\
& &\qquad \qquad  \qquad [(f_{13}-f_{14})f_1+ (f_{23} -f_{13})f_3 + (f_{14}-f_{23})f_4]^{-1} \label{f234}\,, \\
f_{124}&= & [ f_{134}(f_{14}-f_{13})(f_2-f_1)- f_{123}(f_{13}-f_{23}) (f_3-f_2)] \times \nonumber 
\\ 
& & \qquad \qquad \qquad  [f_{13}-f_{14})f_1(+ (f_{23} -f_{13})f_3+ (f_{14}-f_{23}) f_2]^{-1} \,.\label{f124}
\end{eqnarray}
\end{subequations}
This fact is in accordance to the $n$-dimensional consistency property of the underlying 
discrete KdV equation,
\begin{equation} \label{KdV}
(f_{12}-f)(f_1-f_2) = \alpha_1 -\alpha_2 \,.
\end{equation}
In fact, it was shown in \cite{abs2003} that equation \eqref{KdV} enjoys the
$n$-dimensional consistency property. What is even more interesting is that
solutions of the latter provide solutions of lattice dpKP equation. Indeed, consider
three copies of discrete KdV \eqref{KdV} on the faces of a cube
\begin{eqnarray}
	(f_{12}-f)(f_1-f_2)- \alpha_1+\alpha_2&=& 0\,, \\
	(f_{23}-f)(f_2-f_3)- \alpha_2+\alpha_3&=& 0\,, \\
  (f_{13}-f)(f_3-f_1)- \alpha_3+\alpha_1&=& 0\,.
\end{eqnarray}
By adding them up we obtain readily (\ref{dpKP}).
This means that solutions of the compatible dpKdV equation on $\mathbb{Z}^3$ provide solutions 
to the  dpKP equation.

One of the main findings in \cite{ptv2006} was that YB maps may 
be also obtained from quad--graph integrable lattice equations, 
if one considers the extension of lattice equations on a multidimensional lattice along with the 
invariants of the admitted multi--parameter symmetry groups. Thus, it would be interesting to 
investigate whether this property is inherited to genuine three dimensional lattice equations, 
such as the discrete potential KP equation.

Let us  assign the following 3--fold joint invariants along the orbits of the full symmetry group 
of discrete potential KP equation:
\begin{equation}  \label{4dpkp}
  x = \frac{f_1 - f_3}{f_3 - f_2} \,,\quad 
  v = \frac{f_{1} - f_{4}}{f_{4} - f_{2}} \,,\quad 
  z = \frac{f_{24} - f_{23}}{f_{23} - f_{34}} \,, \quad
  u = x_4 \, \quad
  y = v_3 \,, \quad
  w = z_1 \,, \quad
 \end{equation}  
They naturally live on the two--dimensional faces of the 3--cubes
labeled by ``In'', ``Down'', ``Right'' and their shifts in the 4, 3, 1
directions of the $\mathbb{Z}^4$ lattice, respectively.
We note that the invariants \eqref{4dpkp} are straightforward multi--dimensional 
generalization of the joint invariants used in \cite{ptv2006} in connection with 
the lattice KdV equation on $\mathbb{Z}^4$. 
In the present case, the full symmetry group of the dKdV corresponds
to the subgroup generated by $\{X_1,X_2,X_4\}$,
isomorphic to the group of isometries of Minkowski plane. 
  
In order to derive the functional relations among the invariants \eqref{4dpkp}, 
we use the symmetric form of dpKP \eqref{dpKP}. Accordingly, the set of 
lattice dpKP imposed on all 3--cubes of a hypercube can be written as
\begin{equation}
A_{ijk} = A_{ikj} = A_{jik} \,, \qquad A_{ijk} := \frac{f_{ij}-f_{jk}}{f_i-f_k} \,,
\end{equation}
adopting the preceding convention on the indices. From the defining relations of the 
invariants and the KP equations on the ``Front'' and ``Right'', cubes we get that the
invariants are related by
\begin{equation} \label{tes1a}
x\,y = u\,v \\, \qquad v = x+z+x\,z\,.
\end{equation}
On the other hand, from the KP on the ``Out''  and  ``Up''  cubes and
the defining relations of the invariants it follows that
\begin{equation} \label{tes1b}
 y = u+w+u\,w\,.
\end{equation}
Solving equations \eqref{tes1a}, \eqref{tes1b} for $u,v,w$ we get the map
\begin{equation}
 u = \frac{x\,y}{x+z+x\,z} \,, \quad v = x+z+x\,z  \,,\quad  
 w = \frac{y\,z\,(1+x)}{x+z+x\,y+x\,z}\,,
\end{equation}
which is the tetrahedron map (25) in Sergeev \cite{sergeev1998} classification. 
Thus, discrete potential KP, and its intrinsic multi--dimensional consistency property, 
encapsulates many different known and new examples of tetrahedron maps.

\subsection{Tetrahedron map from non--commutative discrete potential KP }

Next we show how our approach can be applied to simple non--commutative versions
of the previous lattice equations to obtain non--commutative tetrahedron maps. 
We focus on non--commutative dpKP equation and in particular its symmetry subgroup 
considered in Section \ref{kpx3} ({\em case b})

\begin{equation} \label{ncKP}
 F_{12} \, (F_1 - F_2) + F_{23}\, (F_2 - F_3) + F_{13} (F_3 - F_1)=0 \,.
\end{equation}
Let $\sigma$ be the cycle matrix 
$$\sigma = \left[\begin{array}{ccccc}
0  & 1  &        &        &   \\
   & 0  & 1      &        &   \\
   &    & \ddots & \ddots &   \\
   &    &        & \ddots & 1 \\
1  &    &        &        & 0
\end{array}\right] \,,$$
and $F$ a generalized cycle matrix i.e.
$$ F = 
\,\left[\begin{array}{ccccc}
0        & f^{(2)}   &         &        &         \\
         & 0         & f^{(3)} &        &         \\
         &           & \ddots  & \ddots &         \\
         &           &         & \ddots & f^{(n)} \\
f^{(1)}  &           &         &        & 0
\end{array}\right] =  \sigma \, \widetilde{F} \,,
$$
where $\widetilde{F} \equiv {\rm diag} (f^{(1)} , f^{(2)} , \cdots , f^{(n)} )$. Then 
\eqref{ncKP} can be written as a generalized cycle matrix equation. 
The corresponding matrix invariants are
\begin{equation} \label{inv_ratiosnc}
x_1 = {f_1}\,{f_2}^{-1},\, x_2 = f_{12}\,{f_{23}}^{-1},\, 
x_3= {f_2}\,{f_3}^{-1},\, 
u_1 = {f_{13}}\,{f_{23}}^{-1},\, u_2 = f_{1}\,{f_{3}}^{-1}, \, 
u_3 = {f_{12}}\,{f_{13}}^{-1}\,.
\end{equation}
Notice that the invariants are diagonal matrices. From \eqref{inv_ratiosnc} 
we immediately find that they are functionally related by
\begin{equation} \label{ncinvrel}
 x_1\,x_3 = u_2 \,, \quad  u_1\,u_3=x_2 \,.
\end{equation} 
On the other hand matrix dpKP \eqref{ncKP} is written in terms of the invariants 
\eqref{inv_ratiosnc} in the following form 
\begin{equation} \label{nckpinv}
 (u_1-{\rm Id})\, \sigma \, (u_2-{\rm Id}) \, {u_2}^{-1} = 
   (x_2-{\rm Id})\, \sigma \, (x_1-{\rm Id}) \, {x_1}^{-1} \,.   
\end{equation}   
The system of equations \eqref{ncinvrel}, \eqref{nckpinv} can be easily solved uniquely in terms
for $u_i$ (and vice versa for $x_i$). The fact that the resulting invertible map satisfies the
tetrahedron property follows from the four dimensional consistency property of 
non--commutative dpKP equation \eqref{dpKP} proved in Section \ref{multi}. 
In the scalar commutative case relations \eqref{ncinvrel} and \eqref{nckpinv} reduce 
to the relations defining tetrahedron map \eqref{x3b}, under the conjugation $x\mapsto 1+x$.

\subsection{Discrete Calapso KP}
Let us consider the following vector generalisation of the
dpKdV  
\begin{equation} \label{cal}
(\,\boldsymbol{f}_{12}-\boldsymbol{f} \,) =
\frac{\alpha_1-\alpha_2}{|\boldsymbol{f}_{1}- 
\boldsymbol{f}_{2}|^2} \, (\,\boldsymbol{f}_{1}-\boldsymbol{f}_2 \,)\,,
\end{equation}
$\boldsymbol{f}:\mathbb{Z}^2 \rightarrow \mathbb{C}^n$, introduced by Schief in
\cite{calapso}, under the name {\em discrete Calapso equation}. Imposing
\eqref{cal} on $\mathbb{Z}^3$, and using its three dimensional consistency
property (see Eq. (78) in \cite{ptv2006} ), one finds that solutions of Eq. 
\eqref{cal} provide solutions of the following vector 3D lattice equation
\begin{equation} \label{dcKP}
  \frac{|\boldsymbol{f}_2-\boldsymbol{f}_3|^2}{|\boldsymbol{f}_{12}-\boldsymbol{f}_{13}|^2}\,(\boldsymbol{f}_{12}-\boldsymbol{f}_{13}) +
  \frac{|\boldsymbol{f}_1-\boldsymbol{f}_2|^2}{|\boldsymbol{f}_{13}-\boldsymbol{f}_{23}|^2}\,(\boldsymbol{f}_{13}-\boldsymbol{f}_{23})+
  \frac{|\boldsymbol{f}_1-\boldsymbol{f}_3|^2}{|\boldsymbol{f}_{23}-\boldsymbol{f}_{12}|^2}\,(\boldsymbol{f}_{23}-\boldsymbol{f}_{12})=0\,.
\end{equation}
The latter equation may be considered as a discrete 3D discrete Calapso equation. In the scalar
(complex) case lattice equation \eqref{dcKP} is a correspondence, i.e. it is
satisfied on solutions of the original KP equation, as well as on solutions of
the following complexification
\begin{equation} \label{comkp}
  f_{12} (\bar{f}_1-\bar{f}_2) + f_{23} (\bar{f}_2-\bar{f}_3) + f_{13} (\bar{f}_3-\bar{f}_1)=0\,.
\end{equation}
Using the invariants of the translational symmetry of \eqref{comkp}, we obtain
the following tetrahedron map
\begin{equation} 
 R_c(x,y,z) = \left( \frac{\bar{x}\,y}{\bar{x}+\bar{z}} , x+z,\frac{\bar{z}\, y}{\bar{x}+\bar{z}}\right)\,.
\end{equation}
The above tetrahedron map is closely related with tetrahedron map \eqref{octakp}, which for 
convenience we rewrite below 
\begin{equation} \label{octakp2}
 R(x,y,z) = \left( \frac{x\,y}{x+z} , x+z,\frac{z\, y}{x+z}\right)\,.
\end{equation}
 Indeed, the
antiholomorphic involution $\sigma:x\mapsto \bar{x}$ is a symmetry of the FT
map \eqref{octakp2}, thus according to Proposition \ref{prop:symm2} the map
$$ \widetilde{R} = \sigma \times Id \times \sigma \, R  \, Id \times
\sigma \times Id \,, $$
is a new solution of tetrahedron equation which coincides with the $R_c$ map above, since 
\begin{equation}
(x,y,z) \overset{\sigma_2}{\longrightarrow} ({x},\bar{y},{z}) \overset{R}{\longrightarrow}
\left( \frac{{x}\,\bar{y}}{{x}+{z}} , {x}+{z},\frac{{z}\,
    \bar{y}}{{x}+{z}}\right) \overset{\sigma_{13}}{\longrightarrow} \left(
  \frac{\bar{x}\,y}{\bar{x}+\bar{z}} , x+z,\frac{\bar{z}\,
    y}{\bar{x}+\bar{z}}\right) \,,
\end{equation}
where $\sigma_2 = {\mathrm{Id}} \times \sigma \times {\mathrm{Id}} $ and 
$\sigma_{13} = \sigma \times {\mathrm{Id}} \times \sigma$.

Next, we investigate the four dimensional case by using the well known correspondence (quaternions) 
  $$ \boldsymbol{v} \in \mathbb{R}^4 \qquad \longleftrightarrow \qquad V = 
  \begin{pmatrix}   \phantom{-}v_1 + {\rm{i}} \, v_2  &  v_3 + {\rm{i}} \, v_4 \\
                              -v_3 + {\rm{i}} \, v_4  &  v_1 - {\rm{i}} \, v_2  \end{pmatrix}\,, \qquad
V^{\,-1} = \frac{V^{\,\ast}}{\det V} \,,$$
where asterisk denotes Hermitian conjugation. 
 In terms of the $G$-invariants 
 $$\boldsymbol{x}=\boldsymbol{f}_1-\boldsymbol{f}_2\,,\quad \boldsymbol{y}=\boldsymbol{f}_{12}-\boldsymbol{f}_{23}\,,\quad
 \boldsymbol{z}=\boldsymbol{f}_2-\boldsymbol{f}_3\,,\quad
 \boldsymbol{u}=\boldsymbol{f}_{13}-\boldsymbol{f}_{23}\,,\quad \boldsymbol{v}=\boldsymbol{f}_1-\boldsymbol{f}_3\,,\quad
 \boldsymbol{w}=\boldsymbol{f}_{12}-\boldsymbol{f}_{13}\,, $$
of the translational invariance of \eqref{dcKP} we obtain the following matrix system 
\begin{equation} \label{quatsys}
Y=U+W\,,\qquad V=X+Z \,, \qquad \frac{\det Z}{\det W}\, W+\frac{\det X}{\det U}\, U = \frac{\det V}{\det Y}\, Y \,.
\end{equation}  
The system \eqref{quatsys} admits the solution 
\begin{equation} \label{ncft} 
 U = Y\,X\,(X+Z)^{-1} \,,\quad   V = X+Z\,, \quad W = Y\,Z\,(X+Z)^{-1}\,, 
 \end{equation}
as one can verify by straightforward calculations using the relation $W\,X=U\,Z$. By 
Hermitian conjugation of the preceding solution we obtain that
$$  U=(X+Z)^{-1}\,X\,Y \,, \quad V = X+Z\,, \quad W = (X+Z)^{-1}\,Z\,Y\,, $$
is also a solution of system \eqref{quatsys}. Both solutions satisfy the tetrahedron 
equation, and they are instances of non--commutative versions of map \eqref{octakp2}.

Similarly, two more solutions of \eqref{quatsys} read
$$
\begin{array}{lcl}
  U = Y\,X^\ast\,(X^\ast+Z^\ast)^{-1} \,, & V= X+Z\,, 
  & W = Y\,Z^\ast\,(X^\ast+Z^\ast)^{-1}\,, \\  [0.2cm]
  U = (X^\ast+Z^\ast)^{-1}\,X^\ast\,Y \,, & V= X+Z\,, 
  & W = (X^\ast+Z^\ast)^{-1}\,Z^\ast\,Y \,, 
\end{array}
$$
which do not satisfy the tetrahedron property. 
However, in the case of $\mathbb{R}^3$ (pure quaternions) all solutions are gauge equivalent 
to \eqref{ncft}, since in this case all matrices are skew--Hermitian, i.e. $X+X^\ast=0$ etc.

\section{Conclusions and Perspectives} \label{concl}

In this paper we presented a generalization of the method developed in \cite{ptv2006}, 
to investigate a relationship between tetrahedron maps and the consistency property of 
integrable discrete equations on $\mathbb{Z}^3$. 
The method is demonstrated by a case--by--case study of the octahedron type lattice 
equations classified recently by Adler, Bobenko and Suris, as well as BKP lattice equation and their 
non--commutative versions  \cite{nc1990} \cite{doko}, \cite{nimmo}, leading to some new 
examples of tetrahedron maps. 
A question which arises naturally is how this link can be explored when the lattice equations 
does not possess enough, if any, symmetry. 

Let us address briefly this issue here by considering a concrete example.  
Invoking discrete modified KP equation ($\chi_4$) it can be written equivalently in the 
following form 
\begin{equation}
   \frac{f\, f_{13}}{f_1\,f_3} \, (f_3-f_1) + 
   \frac{f\, f_{12}}{f_1\,f_2} \, (f_1-f_2) + 
   \frac{f\, f_{23}}{f_2\,f_3} \, (f_2-f_3) =0 \,.
\end{equation} 
Clearly the equation does not admit any translational symmetry. However, if we insist in  
writing it in terms of the invariants of the translational symmetry (the $x,u$ invariants in
\eqref{translation}) then we are forced to take into account also the semi invariants of its
symmetry group, which are the same as for AKP and BKP equations given by \eqref{semi}. 
Denoting the latter invariants by $y_i,v_i$, discrete modified KP 
can be 
written in the following invariant form
$$ u_2 \, v_2 = x_1\,y_1+x_3\,y_3\,. $$
On the other hand the invariants are related by 
$$ u_1+u_3 = x_2 \,, \quad  x_1+x_3=u_2 \,,\quad 
y_1\, y_2 = v_1\,v_2\,,\qquad y_2\,y_3=v_2\,v_3 \,. $$
Since not both $u_1,u_3$ are determined by the above system, and we know that the map \eqref{octakp} 
satisfies the tetrahedron equation, let us augment the above five eqautions with the relation 
$$ u_1\,u_2 = x_1 \, x_2\,.$$
Then the unique solution of the above system of six equations reads
\begin{equation} \label{x4symm}
\begin{array}{*3{>{\displaystyle}l}}
u_1 = \frac{x_1\,x_2}{x_1 + x_3}\,, & 
u_2 = x_1 + x_3\,, & 
u_3 = \frac{x_2\,x_3}{x_1 + x_3} \,, \\ [0.5cm]
v_1 = \frac{(x_1 + x_3)\,y_1\,y_2}{x_1\,y_1+x_3\,y_3} \,,   &
v_2 = \frac{x_1\,y_1 + x_3\,y_3}{x_1+x_3} \,, &
v_3 = \frac{(x_1 + x_3)\,y_2\,y_3}{x_1\,y_1+x_3\,y_3}\,, \end{array} 
\end{equation}
in terms of elementary multi--symmetric polynomials on two sets of three variables. 
The above map satisfies the tetrahedron relation and is
both an involution and a reversible map. Moreover, the inverse map is obtained 
by just interchanging the variables $u_i\leftrightarrow x_i$,
$y_i\leftrightarrow v_i$,  $i=1,2,3$. 
Introducing inhomogeneous (or ``projective") variables $p_i=x_i\,y_i$, and $q_i=u_i\,v_i$,
$i=1,2,3$, the resulting map readily reduces to the tetrahedron map
\eqref{octakp}. 
These observations suggest that discrete modified KP equation can be coupled, 
in a consistent way, with some other lattice equation for a new potential, say $h$, 
which is invariant under translations. 
Indeed, the relation $u_1\,u_2=x_1\,x_2$, that was added, is nothing but the conserved form 
of discrete potential KP ($\chi_3$) equation \eqref{KP_symm}, and the corresponding 
multi--dimensional consistent, coupled 3D lattice system is 
\begin{eqnarray}
  {f_2\, f_{13}} \, (h_3-h_1) + {f_3\, f_{12}} \, (h_1-h_2) + {f_1\, f_{23}} \, (h_2-h_3) &=&0 \,, \\ 
h_{13} \, (h_3 - h_1) + h_{12}\, (h_1 - h_2) + h_{23}\, (h_2 - h_3)   &=&0 \,.
\end{eqnarray}
Apparently, the potential $h$ is invariant under translations, so one speaks for a hidden 
{\em potential symmetry group} of the starting lattice equation ($\chi_4$). 
We postpone to give a detailed analysis on these issues in a future publication, 
and refer to \cite{knpt2019} for a recent account along this line of research for integrable 
discrete equations on quad--graphs.

\small
\bibliographystyle{plain}

\Addresses

\end{document}